\newlist{eprop}{enumerate}{1}
\setlist[eprop]{leftmargin=+4em, label = {\bf(EP-\arabic*)}, resume}
\newlist{ass}{enumerate}{1}
\setlist[ass]{leftmargin=+3em, label = {\bf(A-\arabic*)}, resume}
\newtheorem{Theorem}{Theorem}
\newtheorem{Definition}{Definition}
\newtheorem{Proposition}[Definition]{Proposition}
\newtheorem{Lemma}[Definition]{Lemma}
\newtheorem{Remark}[Definition]{Remark}
\newcommand{\D}{\mathbb{D}}
\newcommand{\N}{\mathbb{N}}
\newcommand{\M}{\mathbb{M}}
\newcommand{\R}{\mathbb{R}}
\newcommand{\Z}{\mathbb{Z}}
\newcommand{\cC}{\mathcal{C}}
\newcommand{\cI}{\mathcal{I}}
\newcommand{\cO}{\mathcal{O}}
\newcommand{\cQ}{\mathcal{Q}}
\newcommand{\cW}{\mathcal{W}}
\newcommand{\fC}{\mathfrak{C}}
\newcommand{\me}{\mathtt{e}}
\DeclareMathOperator{\rank}{rank}
\DeclareMathOperator{\im}{im}
\newcommand{\eps}{\varepsilon}
\newcommand{\setdef}[2]{\left\{\ #1\ \left|\ \vphantom{#1} #2\ \right.\right\}}
\newcommand{\pw}{\text{\normalfont pw}}
\newcommand{\cons}{\text{\normalfont cons}}
\newcommand{\imp}{\text{\normalfont imp}}
\newcommand{\diff}{\text{\normalfont diff}}
\newcommand{\du}{\mathtt{u}}
\newcommand{\dy}{\mathtt{y}}
\newcommand{\Dpwsm}{\D_{\pw\cC^\infty}}
\newenvironment{smallbmatrix}
{\left[\begin{smallmatrix}}
{\end{smallmatrix}\right]}
\newcommand\blfootnote[1]{%
  \begingroup
  \renewcommand\thefootnote{}\footnote{#1}%
  \addtocounter{footnote}{-1}%
  \endgroup
}
\begin{document}

\begin{frontmatter}
\title{Determinability and state estimation for switched differential-algebraic equations}

\author[T]{Aneel Tanwani}\ead{aneel.tanwani@laas.fr }
\author[S]{Stephan Trenn}\ead{trenn@mathematik.uni-kl.de }

\address[T]{LAAS--CNRS, University of Toulouse, 31400 Toulouse, France.}
\address[S]{Department of Mathematics, University of Kaiserslautern, 67663 Kaiserslautern, Germany.} 


\begin{abstract}
The problem of state reconstruction and estimation is considered for a class of switched dynamical systems whose subsystems are modeled using linear differential-algebraic equations (DAEs).
Since this system class imposes time-varying dynamic and static (in the form of algebraic constraints) relations on the evolution of state trajectories, an appropriate notion of observability is presented which accommodates these phenomena.
Based on this notion, we first derive a formula for the reconstruction of the state of the system where we explicitly obtain an injective mapping from the output to the state.
In practice, such a mapping may be difficult to realize numerically and hence a class of estimators is proposed which ensures that the state estimate converges asymptotically to the real state of the system.
\end{abstract}
\end{frontmatter}
\section{Introduction}
\blfootnote{This work was supported by DFG grant {TR 1223/2-1}.}
Switched differential-algebraic equations (DAEs) form an important class of switched systems, where the dynamics are not only discontinuous with respect to time, but also the state trajectories are constrained by certain algebraic equations which may also change as the system switches from one mode to another.
Such dynamical models have found utility e.g.\ in the analysis of electrical power distribution \citep{GrosTren14} and of electrical circuits \citep{Tren12}.
We consider switched DAEs of the following form
\begin{equation}\label{eq:sysLin}
\begin{aligned}
E_\sigma \dot x &= A_\sigma x + B_\sigma u \\
y &= C_\sigma x
\end{aligned}
\end{equation}
where $x, u , y$ denote the state (with dimension $n\in\N$), input (with dimension $\du\in\N$) and output (with dimension $\dy\in\N$) of the system,  respectively. The switching signal $\sigma: (t_0,\infty) \rightarrow \N$ is a piecewise constant, right-continuous function of time and in our notation it changes its value at time instants $t_1<t_2<\ldots$ called the switching times.
For a fixed $\sigma$, the triplet $(x,u,y)$ is used to denote signals satisfying \eqref{eq:sysLin}.
We adopt the convention that over the interval $[t_p, t_{p+1})$ of length $\tau_p:=t_{p+1}-t_p$, the active mode is defined by the quadruple $(E_p,A_p,B_p,C_p)\in\R^{n\times n} \times \R^{n\times n} \times \R^{n\times \du} \times \R^{\dy\times n}$, $p \in \N$.
Over the interval $(t_0,t_1)$, it is assumed that the system has some past which may be described by $(E_0,A_0,B_0,C_0)$.
The solution concept for switched DAEs is studied in \citep{Tren09d} and a brief discussion is also included in Section~\ref{sec:sol}.

Dynamical system with discontinuous, or constrained trajectories have gathered a lot of interest in the control community, as they form an important class of hybrid, or discontinuous dynamical systems, see e.g.\ \citep{GoebSanf09}.
One direction of research for these systems includes the study of structural properties that could be used for control design problems, and in this regard the problem of state reconstruction and estimation is of particular interest \citep{TanwPhD11}.
In the current literature, one finds that the earlier work on observability and observers for switched/hybrid systems was aimed at using classical Luenberger observers for continuous dynamics and treat the switching or discontinuity as a perturbation that can be overcome by certain strong assumptions on system data.
This line of work also requires that the underlying observability notion allows instantaneous recovery of the state from the measured output \citep{BabaPapp05,VidaChiu03}.
Modified Luenberger observers have also been used for constrained dynamical systems using similar observability concepts: see \citep{TanwBrog14} when the constraint sets are convex (or, mildly nonconvex), and \citep{TanwBrog16} when the constraints result in impacts.

However, for switched dynamical systems, several different notions of observability can be defined.
In \citep{SunGe02, XieWang03}, a switched system comprising ordinary differential equations (ODEs) is called observable if there exists a switching signal that allows reconstruction of the initial state from the output measurements over an interval.
This concept also appears in the observer construction (for continuous state and switching signal) proposed in \citep{BallBenv03}.
However, in our recent work, a more relaxed notion of observability has been proposed for switched ODEs \citep{TanwShim13} and switched DAEs \citep{TanwTren12}.
The switching signal in this case is assumed to be known and fixed (i.e.\ playing the same role as the input $u$ which is also assumed to be known and not influenced by the observer). By measuring the outputs and inputs over an interval, and using the data of subsystems activated during that interval, it is determined whether state reconstruction is possible or not.
Several variants of this notion are also collected in the survey \citep{PetrTanw15}.
A state estimation algorithm based on these generalized observability concept can be found in \citep{ShimTanw14, TanwShim13, TanwShim15} for switched ODEs.
The main contribution of this paper is to address the observer design for switched DAEs which, apart from our preliminary work \citep{TanwTren13}, has not been addressed in this literature.

Already in the context of nonswitched systems several challenges arise in the study of DAEs.
The difference basically arises due to the presence of algebraic equations (static relations) in the description of the system because of which the state trajectories can only evolve on the sets defined by the algebraic equations of the active mode.
Observer designs have been studied for (nonswitched) DAEs since 1980's, e.g. \citep{Dai89a,FahmOrei89}.
Unlike ODEs, the observer design in DAEs requires additional structural assumptions and, furthermore, the order of the observer may depend on the design method.
Because of these added generalities, observer design for nonswitched DAEs is still an active research field \citep{BobiCamp11, Daro12}, and the recent survey articles summarize the development of this field \citep{BergReis15ppb, BobiCamp14}.

In studying switched DAEs, our modeling framework allows for time-varying algebraic relations.
The changes in algebraic constraints due to switching introduce jumps in the state of the system, and because of the possibility of a higher-index DAE, these jumps may get differentiated and generate impulsive solutions.
The notion of observability studied in this paper takes into account the additional structure due to algebraic constraints, and the added information from the outputs in case there are impulses observed in the measurements.
This observability concept is then used to construct a mapping from the output space to the state space, which allows us to theoretically reconstruct the state.
The key element of constructing this mapping is to show how the structure of a linear DAE is exploited to recover the information about the state in individual subsystems.
This structural decomposition is then combined with the expressions used for evolution of states in switched DAEs to accumulate all possible recoverable information from past measurements about the state at one time instant.
The construction then yields a systematic procedure for writing the value of the state at a time instant in terms of outputs measured over an interval for which the system is observable.

The theoretical mappings constructed in the process are often not realizable in practice, but the derivation is used to describe a general class of state observers that generate asymptotically converging state estimates.
The main result states that if the observable components of the individual subsystems can be estimated well-enough, and the required observability assumption persistently holds with time, then the estimates converge asymptotically.
Our initial work on observers for switched DAEs \citep{TanwTren13}, and even the observers proposed for switched ODEs \citep{ShimTanw14, TanwShim13} can be seen as a special case of the general class of state estimators studied in this paper.


\section{Contribution and Layout}

This section provides a summary of all the technical results that are developed in this article, and a coherent view of how the different components are connected together to solve the state estimation problem for switched DAEs. The reader may also refer to this section as an index for finding the appropriate section for technical terms.
The main contribution of this article is to provide a systematic procedure for designing observers for system class~\eqref{eq:sysLin} which generate asymptotically convergent state estimates.
The flow-diagram which shows the working of the proposed observer is given in Figure~\ref{fig:obsAll}.

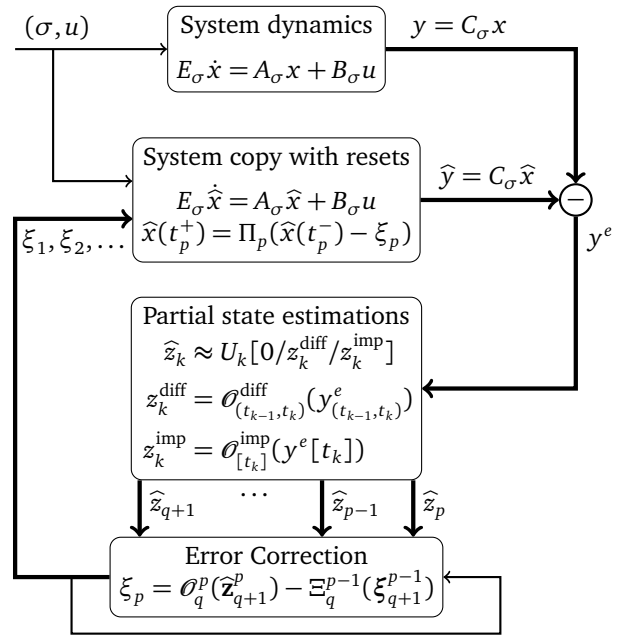
\begin{figure}[hbt]
\centering
\begin{tikzpicture}
\draw (0,0) node [rectangle, rounded corners, draw, align=center, minimum height =0.65cm, text centered] (sys) {System dynamics\\[0.4\baselineskip] $E_\sigma \dot x = A_\sigma x + B_\sigma u$};
\draw (0,-2) node [rectangle, rounded corners, draw, align=center, minimum height =0.65cm, text centered] (copy) {System copy with resets \\[0.4\baselineskip] $E_\sigma \dot{\widehat x} = A_\sigma \widehat x + B_\sigma u$\\ $\widehat x(t_p^+) = \Pi_p(\widehat x(t_p^-) - \xi_p)$};
\draw (0,-4.5) node [rectangle, rounded corners, draw, align=center, minimum height =0.65cm] (estimate) {Partial state estimations\\[0.5ex] $\begin{aligned}
  \widehat{z}_k &\approx  U_k[0/z^\diff_k/z_k^\imp]\\[0.2ex]
  z^\diff_k &= \cO_{(t_{k-1},t_k)}^\diff(y^e_{(t_{k-1},t_k)})\\
  z^\imp_k &= \cO_{[t_k]}^\imp (y^e[t_k])
  \end{aligned}$};

\draw (0,-7) node [rectangle, rounded corners, draw, align=center, minimum height =0.65cm] (corr) {Error Correction\\ $\xi_p = \cO_{q}^p(\widehat{\mathbf{z}}_{q+1}^p) - \Xi_{q}^{p-1}(\boldsymbol{\xi}_{q+1}^{p-1})$};
\coordinate (tr) at ([xshift=2.5cm)]sys.east);
\coordinate (mr) at ([yshift=-2cm)]tr);
\coordinate (br) at ([yshift=-4cm)]tr);
\coordinate (tl) at ([xshift=-2cm)]sys.west);
\coordinate (tml) at ([xshift=0.5cm)]tl);
\coordinate (ml) at ([yshift=-2cm)]tml);
\coordinate (bl) at ([yshift=-4cm)]tml);
\draw [thick,black] (mr) node [draw, circle,inner sep = 0.25mm] (sum) {$-$};
\draw [ultra thick, ->] (sys.east) node[anchor = south west] {$\ \ y = C_\sigma x$} -- (tr) -- (sum.north);
\draw [ultra thick, ->] (copy.east) node[anchor = south west]{$\ \widehat y = C_\sigma \widehat x$} -- (sum.west);
\draw [ultra thick, ->] (sum.south) node[anchor = north west]{$ y^e$} |- (estimate.east);
\draw [thick, ->] (tl) node[anchor = south west] {$(\sigma,u)$} -- (sys.west);
\draw [thick, ->] (tml) -- ([yshift=0.25cm]ml) -- ([yshift=0.25cm]copy.west); 
\draw [ultra thick, ->] (corr.west) -| ([xshift = -0.5cm, yshift=-0.25cm]ml) -- ([yshift=-0.25cm]copy.west) node [anchor=north east] {$\xi_1, \xi_2, \dots\!$}; 
\draw [thick, ->] ([xshift=-0.5cm]corr.west) -- ([xshift=-0.5cm, yshift = -0.75cm]corr.west) -- ([xshift=0.75cm, yshift = -0.75cm]corr.east) -- ([xshift=0.75cm]corr.east) -- (corr.east);
\draw[ultra thick,->] ([xshift=-1.8cm]estimate.south) node[anchor = north west] {$\widehat{z}_{q+1}$} -- ([xshift=-1.8cm]corr.north);
\draw[ultra thick,->] ([xshift=1.8cm]estimate.south) node[anchor = north west] {$\widehat{z}_{p}$} -- ([xshift=1.8cm]corr.north);
\draw[ultra thick,->] ([xshift=0.6cm]estimate.south) node[anchor = north west] {$\widehat{z}_{p-1}$} -- ([xshift=0.6cm]corr.north);
\draw ([xshift=-0.6cm]estimate.south) node[anchor = north west] {$\cdots$};
\end{tikzpicture}
\caption{A schematic representation of the state estimator.}
\label{fig:obsAll}
\end{figure}

Our proposed observer basically comprises two layers. In the first layer, a system copy with variable $\widehat x$ (also the state estimate) is simulated with same time scale as the actual plant, and $\widehat x$ is reset at some switching times $t_p$ by an error correction vector $\xi_p$.
The second layer comprises an algorithm to compute $\xi_p$ in very short (negligibly small) time. This article (in Sections~\ref{sec:obsCond}--\ref{sec:obsDesign}) develops the theoretical tools and numerical certificates that gurantee efficient implementation of this algorithm. The central idea is that $\xi_p$ closely approximates the estimation error prior to the reset times $e(t_p^-):=\widehat x(t_p^-) - x(t_p^-)$ by using the output measurements over the interval $(t_q,t_p]$, $q<p$.
If this approximation can be achieved to a desired accuracy, and these sufficiently rich approximates are injected into the estimate's dynamics by resetting $\widehat x$ often enough, then our algorithms ensure that $\widehat x(t)$ converges to $x(t)$ as $t$ tends to infinity.
There are three primary mechanisms involved in achieving this desired goal:
\begin{itemize}[leftmargin=*]
\item Determinability notion in Section~\ref{sec:obsCond} captures the property that, by measuring the output of \eqref{eq:sysLin} over an interval $(t_q,t_p]$, we must be able to recover the state of the system $x(t_p^+)$ exactly; and Theorem~\ref{thm:DAEDetMult} provides conditions on system data to achieve this property.
\item In Section~\ref{sec:outMaps}, under the assumption that system~\eqref{eq:sysLin} is determinable over the interval $(t_q,t_p]$, we build the theoretical engine to compute $\xi_p$.
This involves looking at the observable components of the individual subsystems for the error dynamics denoted by $z_k$, $k = q+1, \cdots, p$. Using the structure of a DAE, we define the maps $\cO_k^\diff$, $\cO_k^\imp$, and decompose $z_k$ into $z_k^\cons$, $z_k^\diff := \cO_k^\diff(y^e_{(t_{k-1},t_k]})$, $z_k^\imp := \cO_k^\imp(y^e_{[t_k]})$ based on whether they can be recovered from algebraic constraints, smooth output measurements, or impulsive output measurements, respectively.
Theorem~\ref{thm:mapZs} then provides the construction of a matrix $\cO_q^p$ such that
\[
e(t_p^+) = \Pi_p\cO_q^p \mathbf{z}_{q+1}^p,
\]
where $\mathbf{z}_{q+1}^p:=(z_{q+1}, \dots, z_p)$, and $\Pi_p$ is a projector depending on the system data $(E_p,A_p)$.

\item In Section~\ref{sec:obsDesign}, we show how to construct the estimate $\widehat z_k$ for the observable component of the error dynamics. These estimates are then used to define the vector $\xi_p$ recursively (see Fig.~\ref{fig:obsAll} also). The final result, Theorem~\ref{thm:obsGenConv}, proves the asymptotic convergence of $\widehat x(t)$ toward $x(t)$, as $t \rightarrow \infty$, in a rigorous manner.
\end{itemize}

\section{Preliminaries}\label{sec:prelim}

Before addressing the problem of observability, or state estimation, it helps to recall the solution framework associated with \eqref{eq:sysLin}, and certain algebraic properties of the system matrices, that will form the foundation of the analysis carried out in this paper.
In Section~\ref{sec:sol}, we provide a motivation for the solution concept that is adopted for system class~\eqref{eq:sysLin}.
Then, to analyze such solutions, certain properties of the matrix pairs $(E_p, A_p)$, $p \in \N$, are described in Section~\ref{sec:basicProp}.

\subsection{Solution Concept for Switched DAEs}\label{sec:sol}

To discuss the solution concept for the switched DAEs described in \eqref{eq:sysLin}, let us first consider a matrix pair $(E,A)$ and the problem of finding a trajectory $x$ which solves the following initial-trajectory problem (ITP):
\begin{subequations}\label{eq:ITP}
  \begin{align}
    x_{(-\infty,0)}&=x^0_{(-\infty,0)}\label{eq:ITPa}\\
    (E\dot{x})_{[0,\infty)} &= (Ax)_{[0,\infty)}, \label{eq:ITPb}
  \end{align}
\end{subequations}
in some appropriate sense.
In this equation, $x^0$ is some initial trajectory, and $f_\cI$ denotes the restriction of $f$ to an interval $\cI$ and will be defined formally later.
To avoid certain complications (such as nonuniquess of solutions), we limit our attention to \emph{regular} matrix pairs $(E,A)$, i.e.\ we assume that $\det(sE-A)$ is not the zero polynomial.
A fundamental tool in studying \eqref{eq:ITP} is the notion of {\em consistency space}, denoted $\fC_{(E,A)}$, and defined as
\[
   \fC_{(E,A)}:=\left\{x_0\in\R^n \, \vert \, \exists \, x \in\cC^1: E\dot{x}=Ax \wedge x(0)=x_0 \right \}
\]
where $\cC^1$ is the space of differentiable functions $x:\R\to\R^n$.
Intuitively speaking, $\fC_{(E,A)}$ corresponds to the set of initial conditions which are consistent with the algebraic equations encoded in \eqref{eq:ITPb}, and thus a smooth solution evolving within $\fC_{(E,A)}$ can be obtained.
An algebraic characterization of $\fC_{(E,A)}$ in terms of the matrices $E,A$ will be given later in this section.
At this point, it follows from the definition that if $x^0(\cdot)$ is absolutely continuous and $x^0(0^-) \in \fC_{(E,A)}$, then an absolutely continous trajectory $x$ can be defined, such that \eqref{eq:ITP} is satisfied, and furthermore $x(t) \in \fC_{(E,A)}$, for each $t \ge 0$. It is also shown in \citep{Tren09d} that such a solution is unique if, and only if, the matrix pair $(E,A)$ is regular.

However, if $x^0(0^-) \not\in \fC_{(E,A)}$, then it is not so obvious how the solution to \eqref{eq:ITP} must be obtained.
A valid solution $x$ with $x(0^-) = x^0(0^-)$ must ``jump'' to $x(0^+) \in \fC_{(E,A)}$, and from there onwards, the solution can be extended uniquely over the interval $(0,\infty)$.
In contrast to systems with ODEs which only comprise operations involving integration, the DAE \eqref{eq:ITPb} allows the possibility of (higher-order) differentiation.
For this reason, if a jump is introduced at time instant $t = 0$, then the notion of derivatives of jumps must also be introduced in the generalized solution concept associated with \eqref{eq:ITP}.
This motivates us to enlarge the solution space from functions to {\em distributions}, because the derivative of a jump is then formally defined as the {\em Dirac impulse} (or Dirac delta).
But the restriction to an interval as required in \eqref{eq:ITP} cannot be defined for general distributions \citep[Lem.~5.1]{Tren13a}, and hence we consider the smaller space of {\em piecewise smooth distributions}, denoted by $\Dpwsm$, for which \eqref{eq:ITP} is indeed well defined.
We refer the reader to \citep{Tren09d} for formal details, but for this paper, it suffices to recall that $x \in (\Dpwsm)^n$ is written as
\begin{subequations}\label{eq:xGenDist}
\begin{align}
x &= x^f_\D + x[\cdot],
\intertext{where $x_\D^{f}$ denotes the distribution induced by the piecewise smooth function $x^f:\R\to\R^n$ and $x[\cdot]$ denotes the impulsive part of $x$ given by}
   x[\cdot] &= \sum_{k\in\Z} x[t_k] = \sum_{k \in \Z} \sum_{i=0}^{n_k} a_k^i \delta_{t_k}^{(i)},
\end{align}
\end{subequations}
where $\setdef{t_k\in\R}{k\in\Z}$ is a strictly increasing set without finite accumulation and $\delta_{t_k}^{(i)}$ denotes the $i$-th derivative of the Dirac impulse with support at $t_k$. The $n_k+1$ coefficients $a_k^0, a_k^1,\ldots, a_k^{n_k}\in\R^n$ parameterize the \emph{impulsive part} of $x$ at $t_k$, denoted by $x[t_k]$. Furthermore, the \emph{left-} and \emph{right-sided evaluations} $x(t^-)$, $x(t^+)$ are well defined for each $t\in\R$. Finally, a \emph{distributional restriction} is well defined; in fact for $x$ as in \eqref{eq:xGenDist} the restriction to some interval $\cI\subseteq\R$ is defined via
\begin{equation}\label{eq:distr_restr}
   x_{\cI} := (x^f_\cI)_\D + \sum_{t_k\in\cI} x[t_k],
\end{equation}
where $x^f_\cI(t)=x^f(t)$ for $t\in\cI$ and $x^f_\cI(t)=0$ otherwise.

If in \eqref{eq:ITP} it is assumed that $x^0 \in (\Dpwsm)^n$, then we seek a solution $x$ of the form \eqref{eq:xGenDist}, and in this particular case, the impulse times $t_k$ either correspond to the impulse times of $x^0$, or the Dirac impulse that occurs at $t = 0$ due to the jump from an inconsistent initial condition to the consistency space.

When studying switched DAEs of the form \eqref{eq:sysLin}, the consistency spaces for individual subsystems, in general, are different and affected by the presence of inputs as well.
Hence, the switch from one subsystem to another may introduce jumps, and its higher order derivatives which we represent formally using distributions. 
One can draw an analogy between \eqref{eq:ITP} and how the solution of a switched DAE~\eqref{eq:sysLin} is extended beyond a switching time instant: The initial trajectory $x^0 \in \Dpwsm$ corresponds to some past over the interval $(t_0,t_1)$, and then at $t_1$, the system first switches to a new subsystem described by a DAE, which would introduce a jump to the active consistency space, along with the possibility of Dirac impulses.
In general, the state $x(t_p^-)$ may not be consistent with the consistency space of the subsystem $(E_p,A_p,B_p,C_p)$ which is active over the interval $[t_p, t_{p+1})$, and the solution is extended by introducing a jump to the consistency space of the new subsystem, and introducing Dirac impulses at the switching times.
The foregoing discussion motivates us to adopt the framework of piecewise smooth distributions to express solutions of a switched DAE.

\begin{Definition}
The triplet $(x,u,y)$ is called a \emph{(distributional) solution} of the switched DAE \eqref{eq:sysLin} on some interval $\cI\subseteq (t_0,\infty)$ if, and only if, $x \in (\Dpwsm)^n$, $u\in (\Dpwsm)^{\du}$, $y \in (\Dpwsm)^{\dy}$, and  \eqref{eq:sysLin} is satisfied within $\Dpwsm$ on $\cI$, i.e.\
\[
   (E_\sigma\dot{x})_{\cI} = (A_\sigma x+B_\sigma u)_{\cI},\quad y_{\cI} = (C_\sigma x+D_\sigma u)_{\cI}.
\]
If $\cI = (t_0,\infty)$ we call $(x,u,y)$ a \emph{global} solution.
\end{Definition}

In \citep{Tren09d} it is shown that \eqref{eq:sysLin} always has a distributional solution which is uniquely determined on $(t_0,\infty)$ by $x(t_0^+)$ and the input $u$, provided \emph{all matrix pairs $(E_k,A_k)$ are regular}. Throughout this paper we always make this assumption. 

\begin{Remark}\label{rem:local_sol}
Under the regularity assumption and for a fixed input $u$, any solution  of \eqref{eq:sysLin} on some interval $(s,t)$ can be uniquely extended to a solution on $(s,\infty)$; however, it is not always possible to extend this solution in the past as the following simple example shows:
\[
    0 = x \text{ on }(t_0,t_1),\quad\text{and}\quad \dot{x} = 0 \text{ on }[t_1,\infty).
\]
The only global solution is $x=0$, but $x=c$ for any constant $c\in\R^n$ is a solution on $[t_1,\infty)$ which for $c\neq 0$ cannot be extended on the whole interval $(t_0,\infty)$.
\end{Remark}
\subsection{Properties of a matrix pair $(E,A)$} \label{sec:basicProp}

We now collect some important properties and definitions for a regular matrix pair $(E,A)$, which can then of course be applied to each subsystem in \eqref{eq:sysLin}. The properties, and the notations introduced in the process, are then used throughout the paper.

A very useful characterization of regularity is the following well-known result (see e.g.\ \citep{BergIlch12a}).
\begin{Proposition}[Regularity and quasi-Weierstra\ss\  form]
A matrix pair $(E,A)\in\R^{n\times n}\times\R^{n\times n}$ is regular if, and only if, there exist invertible matrices $S,T\in\R^{n\times n}$ such that 
\begin{equation}\label{eq:QWF}
(SET,SAT) = \left( \begin{bmatrix} I & 0  \\  0  &  N \end{bmatrix}, \begin{bmatrix} J  & 0 \\  0  & I \end{bmatrix} \right) ,
\end{equation}
where $J \in \R^{n_1\times n_1}$, $0\leq n_1\leq n$,  is some matrix and $N\in\R^{n_2 \times n_2} $, $n_2:=n-n_1$, is a nilpotent matrix.
\end{Proposition}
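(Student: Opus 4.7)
The plan is to treat the two directions separately, with sufficiency being a direct determinant calculation and necessity requiring a genuine structural decomposition. For the ``if'' direction I would left-multiply $sE-A$ by $S$ and right-multiply by $T$ to obtain a block-diagonal matrix whose determinant factors as $\det(sI-J)\cdot\det(sN-I)$. Since $N$ is nilpotent, every eigenvalue of $sN-I$ equals $-1$ independently of $s$, so $\det(sN-I)=(-1)^{n_2}$ is a nonzero constant, while $\det(sI-J)$ is a monic polynomial of degree $n_1$. Hence $\det(sE-A)$ is a nonzero constant multiple of a nonzero polynomial, which establishes regularity.

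For the ``only if'' direction my strategy is to shift and reduce the pencil problem to a statement about a single matrix. Since $\det(sE-A)$ is a nonzero polynomial I can pick $\lambda\in\R$ with $\det(\lambda E-A)\neq 0$. Setting $M := (\lambda E-A)^{-1}E$ gives $(\lambda E-A)^{-1}A = \lambda M-I$, so left-multiplication by the invertible $(\lambda E-A)^{-1}$ transforms $(E,A)$ into the equivalent pair $(M,\lambda M-I)$. It therefore suffices to bring this equivalent pair into the desired block form.

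I would then invoke the standard generalized-eigenspace decomposition of $M$: write $\R^n=V\oplus W$, where $V$ is the sum of the generalized eigenspaces of $M$ at nonzero eigenvalues (so that $M|_V =: M_1$ is invertible) and $W$ is the generalized eigenspace of $M$ at $0$ (so that $M|_W =: M_2$ is nilpotent). Choosing $T$ adapted to this splitting turns the pair into $(\diag(M_1,M_2),\,\diag(\lambda M_1-I,\lambda M_2-I))$. The lower-right block $\lambda M_2-I$ is invertible because all its eigenvalues equal $-1$. A final left-multiplication by $\diag(M_1^{-1},(\lambda M_2-I)^{-1})$ then produces exactly \eqref{eq:QWF} with $J:=M_1^{-1}(\lambda M_1-I)$ and $N:=(\lambda M_2-I)^{-1}M_2$; all these left multiplications compose into a single invertible $S$.

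The main obstacle I anticipate is verifying that the candidate $N$ is indeed nilpotent after this last transformation. The key observation is that $(\lambda M_2-I)^{-1}$ expands as the terminating Neumann series $-\sum_{i=0}^{k-1}\lambda^i M_2^i$ whenever $M_2^k=0$, so $N$ is itself a polynomial in $M_2$ with no constant term; consequently $N$ commutes with $M_2$ and inherits its nilpotency. The remaining ingredients, namely the existence of a suitable $\lambda$, the invertibility of $\lambda M_2-I$, and the explicit form of $J$, are routine once the spectral decomposition of $M$ is in place.
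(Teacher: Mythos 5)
Your proof is correct, but it follows a different route from the one the paper leans on. The paper does not prove this proposition itself: it states it as a known result, cites \citep{BergIlch12a}, and points to the construction of $S,T$ via the Wong sequences $\cV_{i+1}=A^{-1}(E\cV_i)$, $\cW_{i+1}=E^{-1}(A\cW_i)$, whose limits give the two deflating subspaces. Your argument is instead the classical Weierstrass ``shift-and-invert'' proof: premultiply by $(\lambda E-A)^{-1}$ for a nonsingular shift $\lambda$, apply the Fitting decomposition $\R^n=\im M^n\oplus\ker M^n$ of $M=(\lambda E-A)^{-1}E$, and normalize the two diagonal blocks; the key closing step, that $N=(\lambda M_2-I)^{-1}M_2$ is a polynomial in the nilpotent $M_2$ with zero constant term and hence nilpotent, is handled correctly, and the sufficiency direction via $\det(sN-I)=(-1)^{n_2}$ is also fine. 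The trade-off: your proof is shorter and self-contained but routes through an arbitrary choice of $\lambda$ and yields $S,T$ only implicitly through that choice, whereas the Wong-sequence approach is shift-free and directly produces the invariant subspaces that the paper reuses (e.g.\ $\im\Pi_{(E,A)}=\fC_{(E,A)}$ and the independence of $\Pi_{(E,A)}$, $A^\diff$, $E^\imp$ from the particular quasi-Weierstra\ss\ form), which is why the authors prefer it. Both arguments are valid proofs of the equivalence.
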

The most useful aspect of describing a DAE in quasi-Weierstra\ss\ form \eqref{eq:QWF} is that it decomposes the differential and algebraic part.
If we partition the state $x$ of the system $E\dot x = A x$ after applying coordinate transformation as $(v^\top,w^\top)^\top = T x$ then \eqref{eq:QWF} reads as follows: $\dot v = J v$ and $N\dot w = w$.
The smooth part of the solution is obtained by solving the ODE $\dot v = J v$, and the only classical solution for the equation $N\dot w = w$ is $w=0$. For the latter, the distributional response to an inconsistent initial value is given by
\[
   w[0] = -\sum_{i=0}^{n_2-2}N^{i+1} w(0^-) \delta_0^{(i)}.
\]
We thus obtain an alternate meaning to the solution of a DAE, that is, the ODE component continues to evolve smoothly, whereas the pure algebraic component jumps to the origin with the possibility of Dirac impulses at the time the system is switched on.

To describe the system in the form \eqref{eq:QWF}, one can calculate the matrices $S,T$ by constructing the so called Wong-sequences from the matrices $E,A$, see \citep{BergIlch12a} for details.
Based on these transformations, we define now the following important matrices.

\begin{Definition}\label{def:proj}
 Consider the regular matrix pair $(E,A)$ with corresponding quasi-Weierstra\ss\ form \eqref{eq:QWF}.
 The \emph{consistency projector} of $(E,A)$ is given by
\[
   \Pi_{(E,A)} = T \begin{bmatrix} I & 0\\ 0 & 0 \end{bmatrix}T^{-1}.
\]
Furthermore, let
\[\begin{aligned}
    A^\diff := T \begin{bmatrix} J & 0 \\ 0 & 0 \end{bmatrix} T^{-1}, \text{ and } \ 
    E^\imp := T \begin{bmatrix} 0 & 0 \\ 0 & N \end{bmatrix} T^{-1}.
    \end{aligned}
\]
Finally, if also an output matrix $C$ is considered let
\[
    C^\diff := C \Pi_{(E,A)}.
\]
\end{Definition}

Note that none of the above matrices depends on the specific choice of the (non-unique) quasi-Weierstra\ss\ form \eqref{eq:QWF}.

To give an intuitive interpretation of the objects introduced in Definition~\ref{def:proj}, it is noted that the definition of the consistency projector yields $\im \Pi_{(E,A)}=\fC_{(E,A)}$, and the mapping $\Pi_{(E,A)}$ defines the jump rule onto the consistency space for inconsistent initial conditions, see Lemma~\ref{lem:consProj}.
After the jump, the state evolves within the consistency space, and the matrices $A^\diff$ and $C^\diff$ are used to describe this evolution process, see Lemma~\ref{lem:diffProj}.
The impulsive part of the solution \eqref{eq:ITP} resulting from differentiation of the jump at time $t=0$, denoted by $x[0]$, is described by $E^\imp$, see Lemma~\ref{lem:impulses}.

\begin{Lemma}[{\citep[Thm.~4.2.8]{Tren09d}}]\label{lem:consProj}
   Consider the ITP \eqref{eq:ITP} with regular matrix pair $(E,A)$ and with arbitrary initial trajectory $x^0\in(\Dpwsm)^n$. There exists a unique solution $x\in(\Dpwsm)^n$ and
   \[
      x(0^+)=\Pi_{(E,A)} x(0^-).
   \]
\end{Lemma}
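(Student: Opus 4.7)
The plan is to reduce to quasi-Weierstra\ss\ form, solve the two decoupled blocks separately in $\Dpwsm$, and then undo the coordinate change. By the preceding proposition, regularity of $(E,A)$ yields invertible $S,T$ so that $(SET,SAT)$ is the block-diagonal pair in \eqref{eq:QWF}. Writing $T^{-1}x = (v^\top, w^\top)^\top$ with $v\in(\Dpwsm)^{n_1}$, $w\in(\Dpwsm)^{n_2}$, splitting $T^{-1}x^0$ analogously, and left-multiplying \eqref{eq:ITPb} by $S$, the ITP \eqref{eq:ITP} decouples into
\[
   v_{(-\infty,0)} = v^0_{(-\infty,0)},\quad (\dot v)_{[0,\infty)} = (Jv)_{[0,\infty)},
\]
and
\[
   w_{(-\infty,0)} = w^0_{(-\infty,0)},\quad (N\dot w)_{[0,\infty)} = w_{[0,\infty)}.
\]
Solving each in $\Dpwsm$ and setting $x = T(v^\top, w^\top)^\top$ then recovers a solution of \eqref{eq:ITP}, and vice versa, so existence, uniqueness, and the value of $x(0^+)$ can all be read off from the two blocks.

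For the ODE block, the classical smooth function $v^f(t) = e^{Jt}v(0^-)$ on $[0,\infty)$ solves $\dot v = Jv$, and no Dirac or Dirac-derivative term can be added at $t=0$: any jump or impulsive contribution in $v$ would appear differentiated on the left-hand side with no counterpart on the right-hand side. Hence $v$ is smooth across $0$ and in particular $v(0^+) = v(0^-)$.

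For the nilpotent block I proceed in two steps. First, iteratively differentiating $N\dot w^f = w^f$ on $(0,\infty)$ and using $N^{n_2}=0$ gives $w^f \equiv 0$ on $(0,\infty)$, so $w^f(0^+) = 0$. Second, writing $w[0] = \sum_{i\geq 0} a_i\,\delta_0^{(i)}$, the distributional derivative $\dot w$ contributes $-w(0^-)\delta_0$ from the jump of $w^f$ and $\sum_{i\geq 0} a_i \delta_0^{(i+1)}$ from the impulses; matching the coefficient of each $\delta_0^{(i)}$ in $N\dot w = w$ yields $a_0 = -N w(0^-)$ and the recursion $a_i = N a_{i-1}$ for $i\geq 1$, which terminates at $i = n_2-2$ because $N^{n_2}=0$, reproducing the formula stated just before Definition~\ref{def:proj}. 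Uniqueness in $\Dpwsm$ follows because the same nilpotency argument, applied not only at $0$ but at any hypothetical impulse or jump time $t_*>0$, forces all smooth and impulsive coefficients of a competing solution to vanish.

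Reassembling gives
\[
   x(0^+) = T\begin{pmatrix} v(0^+) \\ w(0^+)\end{pmatrix} = T\begin{pmatrix} v(0^-) \\ 0 \end{pmatrix} = T\begin{bmatrix} I & 0 \\ 0 & 0\end{bmatrix}T^{-1} x(0^-) = \Pi_{(E,A)}\,x(0^-),
\]
as required. The main obstacle I anticipate is the distributional bookkeeping in the nilpotent block: one has to cleanly separate contributions of smooth jumps, Dirac impulses, and their distributional derivatives, and invoke the nilpotency of $N$ order by order to rule out spurious impulsive components both at $0$ and at any positive time, which is where the restriction to $\Dpwsm$ (rather than arbitrary distributions, where the restriction to $[0,\infty)$ is not even well-defined) is essential.
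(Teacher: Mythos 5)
Your argument is correct and follows exactly the route the paper itself sketches informally after the quasi-Weierstra\ss\ proposition in Section~\ref{sec:basicProp}: decouple via $S,T$ into $\dot v=Jv$ (no jump, no impulse, so $v(0^+)=v(0^-)$) and $N\dot w=w$ (nilpotency forces $w\equiv 0$ on $(0,\infty)$ and yields the impulse formula $w[0]=-\sum_{i=0}^{n_2-2}N^{i+1}w(0^-)\delta_0^{(i)}$), then reassemble to get $x(0^+)=\Pi_{(E,A)}x(0^-)$. The paper gives no proof of its own, citing \citep[Thm.~4.2.8]{Tren09d} instead, but your write-up is a faithful, correct formalization of that standard argument, including the order-by-order coefficient matching needed to exclude spurious impulses at $t=0$ and at positive times.
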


\begin{Lemma}\label{lem:diffProj}
  For any regular matrix pair $(E,A)$ and output matrix $C$, the following implication holds for all continuously differentiable $(x,y)$:
  \[\left.\begin{aligned}
     E\dot{x} &= Ax,\\
     y &= Cx
     \end{aligned}\right\}\ \Rightarrow\ 
     \left\{\begin{aligned}
        \dot{x} &= A^\diff x,\\
        y &= C^\diff x
        \end{aligned}\right.
  \]
  In particular, any classical solution $x$ of $E\dot{x}=Ax$ satisfies
  \[
     x(t) = \me^{A^\diff (t-s)} x(s),\quad s,t\in\R.
  \]
\end{Lemma}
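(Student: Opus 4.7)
The plan is to exploit the quasi-Weierstra\ss\ form to decompose a classical solution into its differential and pure-algebraic parts, show that the pure-algebraic part must vanish, and then read off the desired identities from Definition~\ref{def:proj}.

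First I would fix invertible $S,T$ bringing $(E,A)$ into the form \eqref{eq:QWF}, and introduce the transformed state $(v^\top,w^\top)^\top := T^{-1}x$ with $v$ of dimension $n_1$ and $w$ of dimension $n_2$. Premultiplying $E\dot x = Ax$ by $S$ decouples the system into $\dot v = Jv$ and $N\dot w = w$. The key step is to show that any continuously differentiable $w$ satisfying $N\dot w = w$ must be identically zero: differentiating the identity repeatedly and using $N\dot w = w$ gives $N^k w^{(k)} = w$ for every $k\in\N$, and since $N$ is nilpotent, taking $k\geq n_2$ yields $w=0$. Thus $x = T\bigl(\begin{smallmatrix} v\\0\end{smallmatrix}\bigr)$ with $\dot v = Jv$.

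Next I would verify the two identities directly from the block forms in Definition~\ref{def:proj}. On the one hand,
\[
\dot x = T\begin{bmatrix}\dot v\\ 0\end{bmatrix} = T\begin{bmatrix}Jv\\ 0\end{bmatrix} = T\begin{bmatrix}J & 0\\ 0 & 0\end{bmatrix}T^{-1} x = A^\diff x.
\]
On the other hand, $\Pi_{(E,A)} x = T\,\mathrm{diag}(I,0)\,T^{-1} x = T\bigl(\begin{smallmatrix} v\\0\end{smallmatrix}\bigr) = x$, hence $C^\diff x = C\Pi_{(E,A)} x = Cx = y$. This proves the implication.

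For the ``in particular'' statement, once $\dot x = A^\diff x$ has been established for every classical solution, this is a constant-coefficient linear ODE on $\R^n$, so by the standard existence-and-uniqueness theorem the flow is given by $x(t) = \me^{A^\diff(t-s)}x(s)$ for all $s,t\in\R$. I do not expect any serious obstacle here; the only slightly nontrivial point is the nilpotent-implies-trivial argument for $w$, which is why the classical (rather than distributional) framework is essential and why the impulsive contribution captured by $E^\imp$ does not appear in this lemma.
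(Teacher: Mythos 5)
Your overall strategy is exactly the intended one: the paper outsources this proof to \citep[Lem.~3]{TanwTren10}, but the route is the same quasi-Weierstra\ss\ argument that Section~\ref{sec:basicProp} itself sketches --- decompose $x=T(v^\top,w^\top)^\top$, show the nilpotent part $w$ vanishes, and read off $A^\diff$, $\Pi_{(E,A)}$, $C^\diff$ from their block forms. (Your convention $(v^\top,w^\top)^\top=T^{-1}x$ is the one consistent with Definition~\ref{def:proj}, even though the prose of Section~\ref{sec:basicProp} writes $Tx$; your block verifications and the ``in particular'' step are fine.) The one step that is not rigorous as written is the claim that ``differentiating the identity repeatedly'' yields $N^k w^{(k)}=w$: this presupposes that $w$ is $k$ times differentiable, whereas the hypothesis only gives $w\in\cC^1$, and from $N\dot w=w\in\cC^1$ you may only conclude that $N\dot w$ is differentiable, not that $\dot w$ is (since $N$ is singular). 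The clean way to close this using only $\cC^1$ regularity is a downward induction on powers of $N$: with $\nu$ the nilpotency index of $N$, multiplying $N\dot w=w$ by $N^{\nu-1}$ gives $N^{\nu-1}w\equiv 0$; differentiating this identity gives $N^{\nu-1}\dot w\equiv 0$, and multiplying the original equation by $N^{\nu-2}$ then gives $N^{\nu-2}w\equiv 0$; iterating down to $N^{0}$ yields $w\equiv 0$. With that one-line repair your argument is complete.
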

\begin{proof}
The proof is a simple consequence of \citep[Lem.~3]{TanwTren10}.
\end{proof}
 
 \begin{Remark}
    One may wonder why we replace the output matrix $C$ by $C^\diff$. As a motivation, consider a simple DAE with $E = \begin{smallbmatrix} 1 & 0 \\ 0 & 0\end{smallbmatrix}$, $A = I_{2\times 2}$, and $C = [0\ 1]$. For this system, we have $A^\diff = \begin{smallbmatrix} 1 & 0\\0 & 0\end{smallbmatrix}$, and $C^\diff = [0 \ 0]$.
    Hence, with $C^\diff$, it is obvious that we cannot deduce any information about the state from the continuous flow and its corresponding output $y$. However, when using the original output equation the corresponding unobservable space of the ODE is not the whole space, i.e., some parts of the state can be recovered from the output. Although it is true that we can recover $x_2$ from the output, it is misleading because in the DAE, $x_2$ is always zero due to the algebraic constraint. Indeed, the ability to recover $x_2$ does not depend on the actual output matrix $C$.
 \end{Remark}
\begin{Lemma}[{\citep[Cor.~5]{TanwTren10}}]\label{lem:impulses}
Consider the ITP \eqref{eq:ITP} with regular matrix pair $(E,A)$ and the corresponding $E^\imp$ matrix from Definition~\ref{def:proj}.
For the unique solution $x\in(\Dpwsm)^n$ of \eqref{eq:ITP}, it holds that
  \begin{equation}\label{eq:impSol}
     x[0] = -\sum_{i=0}^{n-2}(E^\imp)^{i+1} x(0^-) \delta_0^{(i)},
  \end{equation}
  where $\delta_0^{(i)}$ denotes the $i$-th (distributional) derivative of the Dirac-impulse $\delta_0$ at $t=0$.
\end{Lemma}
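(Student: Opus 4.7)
The plan is to reduce to the quasi-Weierstra{\ss} form and then compute the impulsive response of the pure-DAE block $N\dot w = w$ directly by matching coefficients of Dirac derivatives.

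First I would apply the transformation $(SET, SAT) = (\mathrm{diag}(I,N), \mathrm{diag}(J, I))$ guaranteed by regularity. Writing $T^{-1}x = (v^\top, w^\top)^\top$ with $v\in\R^{n_1}$, $w\in\R^{n_2}$, the ITP \eqref{eq:ITP} decouples on $[0,\infty)$ into the ODE $\dot v = J v$ and the pure-DAE $N\dot w = w$. The ODE piece contributes no impulsive part, so the entire impulsive content of $x[0]$ is carried by $w[0]$. Because $\Pi_{(E,A)}$ and $E^\imp$ are invariant under the choice of $S,T$, it suffices to verify \eqref{eq:impSol} in these coordinates and then transform back.

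Next, on $[0,\infty)$ the only classical solution of $N\dot w = w$ is $w\equiv 0$ (apply $N^{n_2-1}$ from the left and iterate), so the smooth part of $w$ on $[0,\infty)$ vanishes and $w(0^+)=0$. Write then
\[
  w = w^f_\D + \sum_{i=0}^{n_2-1} \alpha_i\,\delta_0^{(i)},
\]
with $w^f = w^0$ on $(-\infty,0)$ and $w^f = 0$ on $[0,\infty)$, and unknown coefficients $\alpha_i\in\R^{n_2}$. Using the standard distributional differentiation of a piecewise smooth function across a jump,
\[
  \dot w = (\dot w^f)_\D + (w(0^+) - w(0^-))\,\delta_0 + \sum_{i=0}^{n_2-1}\alpha_i\,\delta_0^{(i+1)},
\]
so inserting into $N\dot w = w$ and restricting to $[0,\infty)$ in the sense of \eqref{eq:distr_restr} eliminates the contribution of $w^0$ on $(-\infty,0)$ and yields the purely impulsive identity
\[
  -N w(0^-)\,\delta_0 + N\sum_{i=0}^{n_2-1}\alpha_i\,\delta_0^{(i+1)} \;=\; \sum_{i=0}^{n_2-1}\alpha_i\,\delta_0^{(i)}.
\]

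Matching coefficients of $\delta_0^{(i)}$ gives the recursion $\alpha_0 = -Nw(0^-)$ and $\alpha_i = N\alpha_{i-1}$ for $i\geq 1$, hence $\alpha_i = -N^{i+1}w(0^-)$. By nilpotency of $N$ this vanishes for $i\geq n_2-1$, so the sum terminates and can be extended to $i=0,\ldots,n-2$ by adjoining zero terms. Transforming back via $x = T(v^\top, w^\top)^\top$ together with the identity
\[
  (E^\imp)^{i+1} x(0^-) \;=\; T\begin{bmatrix} 0 & 0 \\ 0 & N^{i+1}\end{bmatrix} T^{-1} x(0^-) \;=\; T\begin{bmatrix} 0 \\ N^{i+1} w(0^-)\end{bmatrix}
\]
delivers exactly \eqref{eq:impSol}. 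The only delicate step is the bookkeeping of the jump term $(w(0^+)-w(0^-))\delta_0$ when differentiating $w^f_\D$ distributionally; once that is correctly accounted for, the rest is a finite linear recursion closed by the nilpotency of $N$.
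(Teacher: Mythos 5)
Your proposal is correct. Note that the paper does not actually prove this lemma: it is imported verbatim from \citep[Cor.~5]{TanwTren10}, so there is no in-paper argument to compare against. Your derivation --- passing to the quasi-Weierstra\ss{} coordinates \eqref{eq:QWF}, observing that the $v$-block is a classical ODE with no impulsive part, showing $w^f\equiv 0$ on $(0,\infty)$ by iterating $w = N\dot w = \dots = N^{n_2}w^{(n_2)}=0$, and then matching coefficients of $\delta_0^{(i)}$ in the distributional identity $(N\dot w)_{[0,\infty)}=w_{[0,\infty)}$ to obtain $\alpha_0=-Nw(0^-)$, $\alpha_i=N\alpha_{i-1}$ --- is exactly the computation that underlies the cited result, and it reproduces the formula $w[0]=-\sum_{i=0}^{n_2-2}N^{i+1}w(0^-)\delta_0^{(i)}$ that the paper itself displays for the nilpotent block in Section~\ref{sec:basicProp}; the pullback through $E^\imp = T\,\mathrm{diag}(0,N)\,T^{-1}$ is also handled correctly. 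The only point worth tightening is the ansatz that the impulsive part of $w$ is supported at $t=0$ alone: a priori an element of $\Dpwsm$ may carry impulses at other times in $[0,\infty)$, so you should either note that the same coefficient-matching at any $t_k>0$ (where there is no jump, since $w^f\equiv 0$ on both sides) forces those coefficients to vanish, or simply verify that your constructed distribution solves \eqref{eq:ITP} and invoke the uniqueness already asserted in the lemma. With that one-line addition the argument is complete.
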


\section{Determinability Conditions}\label{sec:obsCond}

To address the problem of state estimation, we first need to study the appropriate notion of observability for switched DAEs.
Several different notions of observability can be defined in the context of switched systems as proposed in a survey chapter \citep{PetrTanw15}.
For the purpose of this paper, it suffices to recall that our approach towards studying observability has three distinctive features compared to the linear time-invariant ODE systems.

\emph{Final-state observability:}
For nonswitched dynamical systems described by linear ODEs, observability relates to recovering the unknown initial value of the state using the input and output measurements.
For switched systems, due to the presence of non-invertible state reset maps (which is the case for switched DAEs), the backward flow in time may not be uniquely defined. Because of this reason, recovering the current state does not imply recovering the initial state (the converse however always holds).
For asymptotic state estimation, it then only makes sense to talk about observability of the current-state, or final-state value at the end of an interval.
This concept is closely related to the notion of final-state observability defined in \citep[Chapter 6]{Sont98a}.

\emph{Large-time observability:}
Typically, for continuous LTI systems, the interval over which the inputs and outputs are measured to recover the state can be arbitrarily small. However, for switched systems, the observation over larger intervals and the change in dynamics due to switching allow us to extract additional information about the state.
This motivates us to study the problem of observability without requiring observability of the individual subsystems in the classical framework, and devise a framework to take different information from multiple modes to derive relaxed conditions.
The concept of large-time observability has also been found useful in the study of nonlinear ODEs \citep{HespLibe05}.

\emph{Algebraic structure:}
The state of a system comprising switched DAEs is constrained by certain algebraic equations in the model description. This is taken into account by the notions of R-observability or behavioral observability for unswitched DAEs (c.f.\ the recent survey \cite{BergReis16a}).
For the model-based state estimation, which is being studied in this paper, computing consistency spaces for individual subsystems from the algebraic equations provides additional information about the state.
Moreover, one may observe the Dirac impulses in the output, which also result due to different algebraic equations allowed in the system class \eqref{eq:sysLin}, and these impulsive measurements prove useful in state estimation. This is related to the notion of impulse observability or observability at infinity (c.f.\ \cite{BergReis16a}), but here we only extract some partial information without assuming that the subsystem are impulse observable.

In the light of the above discussion, we now introduce the notion of {\em determinability}, that combines the properties of large-time and final-state observability. It is formally defined as follows:
\begin{Definition}[Determinability] \label{defn:detObs}
The switched DAE \eqref{eq:sysLin} is called $(s,t]$-\emph{determinable} for $t_0\leq s < t$ if for every pair of local solutions $(x,u,y), (\bar{x},\bar{u},\bar{y})$ of \eqref{eq:sysLin} on $(s,\infty)$ with $u\equiv \bar{u},$ the following implication holds
\begin{equation}\label{eq:det_def}
    y_{(s,t]}=\bar{y}_{(s,t]}\ \Longrightarrow\ x_{(t,\infty)} = \bar{x}_{(t,\infty)}.
\end{equation}
\end{Definition}

Hence, for a fixed switching signal $\sigma$ and given input, the system \eqref{eq:sysLin} is considered determinable over an interval if the output measurements over that interval allow us to reconstruct the value of the state in the future.
Note that the left expression of the implication \eqref{eq:det_def} is equivalent to the conditions $y_{(s,t)}=\bar{y}_{(s,t)}$ and $y[t] = \bar{y}[t]$; in particular the knowledge of the distributional output $y[t]$ will play a major role in the forthcoming observer construction. Furthermore, the right expression of the implication \eqref{eq:det_def} is, due to the regularity assumption, equivalent to the condition $x(t^+) = \bar{x}(t^+)$.


\begin{Remark}
In Definition~\ref{defn:detObs}, we only consider solutions that satisfy \eqref{eq:sysLin} over the interval $(s,\infty)$ and such solutions are not required to satisfy the DAE prior to time $s$.
The reason being, to characterize $(s,t]$-determinability, we are interested in computing the set of indistinguishable states at time $t^+$ using \emph{only} the information from the input and output measurements over the interval $(s,t]$.
To obtain such a characterization it is essential to allow for $x(s^-)$ to be arbitrary;
Otherwise, if $x(s^-)$ is constrained as a solution of a DAE, then that information could possibly reduce the set of states reachable at time $t^+$, see Remark~\ref{rem:local_sol}.
In other words, the constrained past would influence the computation of the set of indistinguishable states which is not desired for our purposes.
\end{Remark}

Based on \citep[Propostion~9]{TanwTren12}, we can restrict our attention to zero-determinability.
Although there is a slight difference in the definition used in this paper, and the one given in \citep{TanwTren12}, the proof of the following result goes through with the same arguments, and is thus not repeated here.
\begin{Proposition}[{\citep[Propostion~9]{TanwTren12}}]\label{prop:fwdObsZero}
The switched DAE \eqref{eq:sysLin} is $(s,t]$-determinable if, and only if, the following implication holds for any local solution $(x,u,y)$ of \eqref{eq:sysLin} on $(s,\infty)$ with $u\equiv 0$:
\[
   y_{(s,t]} \equiv 0 \quad \Longrightarrow \quad x_{(t,\infty)} = 0.
\] 
\end{Proposition}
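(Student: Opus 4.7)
The plan is to exploit linearity of \eqref{eq:sysLin}: since the dynamics are linear in $(x,u,y)$ and $\Dpwsm$ is a vector space, the set of local solutions on $(s,\infty)$ is closed under taking differences. The restriction operation defined in \eqref{eq:distr_restr} is also linear, so if $(x,u,y)$ and $(\bar x,\bar u,\bar y)$ are local solutions on $(s,\infty)$ then $(x-\bar x,\, u-\bar u,\, y-\bar y)$ is a local solution on $(s,\infty)$ as well.

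For the ``only if'' direction, I assume the switched DAE is $(s,t]$-determinable. Given any local solution $(x,0,y)$ on $(s,\infty)$ with $y_{(s,t]}\equiv 0$, I compare it with the trivial solution $(\bar x,\bar u,\bar y)=(0,0,0)$, which is clearly a solution on $(s,\infty)$ with $\bar u\equiv 0\equiv u$ and $\bar y_{(s,t]}\equiv 0$. Since $y_{(s,t]}=\bar y_{(s,t]}$, determinability gives $x_{(t,\infty)}=\bar x_{(t,\infty)}=0$, as required.

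For the ``if'' direction, I assume the zero-input implication holds and take two local solutions $(x,u,y),(\bar x,\bar u,\bar y)$ of \eqref{eq:sysLin} on $(s,\infty)$ with $u\equiv\bar u$ and $y_{(s,t]}=\bar y_{(s,t]}$. Setting $\widetilde x:=x-\bar x$ and $\widetilde y:=y-\bar y$, linearity yields that $(\widetilde x,0,\widetilde y)$ is a local solution of \eqref{eq:sysLin} on $(s,\infty)$, and by linearity of the restriction map $\widetilde y_{(s,t]}=y_{(s,t]}-\bar y_{(s,t]}=0$. The hypothesis then forces $\widetilde x_{(t,\infty)}=0$, hence $x_{(t,\infty)}=\bar x_{(t,\infty)}$, which is exactly $(s,t]$-determinability.

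There is no real obstacle here; the only point requiring minor care is the verification that subtraction genuinely produces a local solution in the distributional sense, which in turn relies on the fact that $\Dpwsm$ is a vector space, that distributional differentiation and multiplication by the piecewise-constant matrix-valued coefficients $E_\sigma,A_\sigma,B_\sigma,C_\sigma$ are $\R$-linear on $\Dpwsm$, and that distributional restriction \eqref{eq:distr_restr} is linear. All of these are standard properties of piecewise smooth distributions recalled in Section~\ref{sec:sol}.
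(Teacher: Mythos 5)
Your proof is correct and follows exactly the standard superposition argument that the paper delegates to \citep[Prop.~9]{TanwTren12}: compare with the zero solution for necessity, and subtract two solutions with equal inputs and outputs for sufficiency, using linearity of the dynamics and of the distributional restriction \eqref{eq:distr_restr} on $\Dpwsm$. No gaps; the point you flag about differences of local solutions being local solutions is indeed the only thing to check, and it holds for the reasons you give.
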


Thus, in studying whether the state can be completely determined from the knowledge of external measurements, one can ignore the role of inputs and set them to zero.
In essence, for studying determinability conditions, we can reduce our attention to the following system class:
\begin{equation}\label{eq:sysHomLin}
\begin{aligned}
E_\sigma \dot e &= A_\sigma e \\
y^e & = C_\sigma e.
\end{aligned}
\end{equation}
The reason for changing the notation for state variable from $x$ to $e$ in the homogenous system \eqref{eq:sysHomLin} is that later, in the observer design, we will be applying our notion of determinability, and the arguments that follow, to a certain homogenous system of error dynamics. Hence, it is helpful to work with this notation at this point.

The objective now is to compute the set of indistinguishable states for \eqref{eq:sysHomLin} in the sense of Definition~\ref{defn:detObs}.
To do so, we first study the single switch case where we show how the structural decomposition in quasi-Weierstra\ss\  form \eqref{eq:QWF} and the result of Lemmas~\ref{lem:consProj}, \ref{lem:diffProj} and \ref{lem:impulses} allow us to extract the information about the state from the output measured on one switching interval.
When studying the case of multiple switches in Section~\ref{sec:multObs}, we describe how this information is then accumulated at a single time instant to arrive at a characterization of determinability.

\subsection{Observable Information from Single Switch}

Consider the homogeneous switched DAE \eqref{eq:sysHomLin} on $(t_0,\infty)$ and let $t_1>t_0$ be the first switching instant after $t_0$. Let the active subsystem over the interval $(t_0,t_1)$ be denoted by $(E_0,A_0,C_0)$ and assume that the active mode after that is $(E_1,A_1,C_1)$.
We are interested in knowing what information can be deduced about $e(t_1^+)$ using the knowledge of the output $y^e$ of \eqref{eq:sysHomLin} measured on $(t_0,t_1]$ and  the system matrices. Invoking Lemma~\ref{lem:consProj} we know that $e(t_1^+)=\Pi_1 e(t_1^-)$ and it suffices therefore to focus on $e(t_1^-)$ in the following.

\emph{Consistency space:} Independently of the observed output it holds that $e(t_1^-)\in\fC_0$, where $\fC_0:=\fC_{(E_0,A_0)}$ denotes the consistency space of the DAE corresponding to the matrix pair $(E_0,A_0)$.

\emph{Differential unobservable space:} If $y^e_{(t_0,t_1)}\equiv 0$ then $(y^e)^{(i)}(t_1^-)=0$ for all $i\in\N$, hence, invoking Lemma~\ref{lem:diffProj}, we have $e(t_1^-)\in\ker O_0^\diff$, where $\ker O_0^\diff$ denotes the unobservable space of the ODE $\dot{e}=A_0^\diff e$, $y^e=C_0^\diff e$, i.e.\footnote{By $[M_1/M_2/\dots/M_k]$, we denote the matrix which is obtained by stacking the matrices $M_1,M_2,\dots,M_k$ (with the same number of columns)  over each other.}
\begin{equation}\label{eq:observabilitymatrix}
  O_0^\diff:=[C_0^\diff / C_0^\diff A_0^\diff / \cdots / C_0^\diff (A_0^\diff)^{n-1}].
\end{equation}

\emph{Impulse unobservable space:} Finally, due to Lemma~\ref{lem:impulses}, from the equality $0=y^e[t_1]=C_1 e[t_1]$ it follows that $e(t_1^-)\in\ker O_1^{\imp}$, where
  \[
     O_1^\imp:=[C_1 E_1^\imp / C_1 (E_1^\imp)^2 / \cdots / C_1 (E_1^\imp)^{n-1}].
  \]

Altogether this provides the motivation to introduce the \emph{locally unobservable subspace}:
\begin{equation}\label{eq:W1}
\cW_1 := \fC_0 \cap \ker O_0^\diff \cap \ker O_1^\imp.
\end{equation}

\begin{Proposition}\label{prop:fwdObsSing}
The following equality of sets holds:
\[
\cW_1= \setdef{e(t_1^-)}{(e,y^e) \text{ solves \eqref{eq:sysHomLin} and }y^e_{(t_0,t_1]} = 0}.
\]
\end{Proposition}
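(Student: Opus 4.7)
The plan is to prove the equality by showing both inclusions, with the three conditions defining $\cW_1$ corresponding precisely to the three mechanisms (membership in the consistency space, smooth output over the interval, impulsive output at the switch) through which information about $e(t_1^-)$ can leak into the measurement $y^e_{(t_0,t_1]}$.

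For the inclusion $\{e(t_1^-):\dots\}\subseteq\cW_1$, take any distributional solution $(e,y^e)$ of \eqref{eq:sysHomLin} with $y^e_{(t_0,t_1]}=0$. On the open interval $(t_0,t_1)$ the switching signal is constant, so $e$ restricted there is a classical solution of $E_0\dot e=A_0 e$ evolving within $\fC_0$; in particular $e(t_1^-)\in\fC_0$. Lemma~\ref{lem:diffProj} then gives $y^e(t)=C_0^\diff\me^{A_0^\diff(t-t_1)}e(t_1^-)$ on $(t_0,t_1)$, and identically vanishing of this analytic function yields $C_0^\diff(A_0^\diff)^i e(t_1^-)=0$ for $i=0,\dots,n-1$ (by differentiating at $t_1^-$ and invoking Cayley--Hamilton), i.e.\ $e(t_1^-)\in\ker O_0^\diff$. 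At $t_1$ the switch activates $(E_1,A_1,C_1)$ and Lemma~\ref{lem:impulses} (applied after shifting $t_1$ to play the role of $0$) gives
\[
   y^e[t_1]=C_1 e[t_1]=-\sum_{i=0}^{n-2}C_1(E_1^\imp)^{i+1}e(t_1^-)\,\delta_{t_1}^{(i)};
\]
linear independence of the $\delta_{t_1}^{(i)}$ together with $y^e[t_1]=0$ forces $e(t_1^-)\in\ker O_1^\imp$.

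For the reverse inclusion, fix $\xi\in\cW_1$ and define $e(t):=\me^{A_0^\diff(t-t_1)}\xi$ for $t\in(t_0,t_1)$. Since $\xi\in\fC_0=\im\Pi_{(E_0,A_0)}$ and $\fC_0$ is $A_0^\diff$-invariant (direct from the block structure in \eqref{eq:QWF}), $e$ remains in $\fC_0$ and, running Lemma~\ref{lem:diffProj} in reverse, satisfies $E_0\dot e=A_0 e$ on $(t_0,t_1)$ with $e(t_1^-)=\xi$. The assumption $\xi\in\ker O_0^\diff$ combined with Cayley--Hamilton then yields $C_0^\diff\me^{A_0^\diff(t-t_1)}\xi=0$, i.e.\ $y^e_{(t_0,t_1)}=0$. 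Extending $e$ past $t_1$ as a piecewise-smooth distributional solution of \eqref{eq:sysHomLin} via Lemma~\ref{lem:consProj} and Lemma~\ref{lem:impulses} produces the impulsive part displayed above, and $\xi\in\ker O_1^\imp$ makes its contribution to $y^e[t_1]$ vanish; choosing the initial trajectory $e^0_{(-\infty,t_0)}$ arbitrarily (say zero) then packages everything into an admissible solution with the required properties.

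The only non-bookkeeping point is the $A_0^\diff$-invariance of $\fC_0$, which is used both to guarantee that the candidate $\me^{A_0^\diff(t-t_1)}\xi$ stays in $\fC_0$ (so that it is a genuine solution, not merely a curve satisfying the ODE $\dot e=A_0^\diff e$) and to justify the Cayley--Hamilton reduction of the observability condition; both follow immediately from \eqref{eq:QWF}. Once this is noted, the proof is simply the distribution of the vanishing of the output into its smooth and impulsive components.
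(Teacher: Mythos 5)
Your proposal is correct and follows essentially the same route as the paper: the forward inclusion is the paper's preceding discussion (consistency space, vanishing smooth output via Lemma~\ref{lem:diffProj}, vanishing impulsive output via Lemma~\ref{lem:impulses}), and the reverse inclusion uses the identical construction $e(t)=\me^{A_0^\diff(t-t_1)}\xi$ together with $A_0^\diff$-invariance of $\fC_0$ and of $\ker O_0^\diff$. The only cosmetic remark is that Cayley--Hamilton is not needed in the forward direction (all derivatives of the identically zero output vanish, so in particular the first $n$ do); it is only needed, as you correctly use it, for the invariance of $\ker O_0^\diff$ in the reverse direction.
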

The proof appears in \ref{app:proofs}.
As a consequence of this result, we have
\[
   \Pi_1 \cW_1 = \setdef{e(t_1^+)}{(e,y^e) \text{ solves \eqref{eq:sysHomLin} and }y^e_{(t_0,t_1]} = 0}
\]
and we call $\Pi_1 \cW_1$ the locally undeterminable space.

\subsection{Gathering Information over Multiple Switches}\label{sec:multObs}

We are now interested in combining information from different modes to characterize how much knowledge about the state can be recovered if we observe measurements over intervals involving multiple switches.
The discussion in the previous section applied to the interval $(t_{p-1},t_p]$, yields that the measurements over an interval $(t_{p-1},t_p]$, $ p \ge 1$, allow us to deduce $e(t_p^-) \in \cW_p$, where
\[
   \cW_p := \fC_{p-1}\cap\ker O_{p-1}^\diff \cap \ker O^{\imp}_{p}
\]
is the locally unobservable space at $t_p$. The objective is to gather the information from previous locally unobservable subspaces and deduce more knowledge about the value of the state at a given time. To do so, we introduce the following sequence of subspaces:
\begin{equation}\label{eq:DAESeqDet}
\begin{aligned}
\cQ_q^{q+1} &:= \cW_{q+1},\\
\cQ_q^{p} & := \cW_{p} \cap \me^{A^\diff_{p-1} \tau_{p-1}} \Pi_{p-1} \cQ_q^{p-1}, \ p > q+1.\\
\end{aligned}
\end{equation}

The intuition behind this definition is as follows: Our aim is to let $\cQ^p_q$ satisfy
\begin{equation}\label{eq:impFwdObsMult}
 \boxed{\cQ^p_q = \setdef{e(t_p^-)}{(e,y_e)=(e,0)\text{ solves \eqref{eq:sysHomLin} on }(t_q,t_p]},}
\end{equation}
in particular, $\Pi_p \cQ_q^p$ is the \emph{$(t_q,t_p]$-undeterminability space} of \eqref{eq:sysHomLin} in the sense that it is the subspace of points $e(t_p^+)=\Pi_p e(t_p^-)$ which cannot be determined from the knowledge of $y^e$ on $(t_p,t_q]$.
Indeed we have the following result.

\begin{Theorem}[Determinability characterization]\label{thm:DAEDetMult}
Consider the homogeneous switched DAE \eqref{eq:sysHomLin} with corresponding consistency projectors $\Pi_p$ and subspaces $\cQ_q^p$, $p>q$, recursively defined by \eqref{eq:DAESeqDet}. Then \eqref{eq:impFwdObsMult} holds for any $p>q\geq 0$. In particular, the switched DAE \eqref{eq:sysLin} is $(t_q,t_p]$-determinable, if and only if,
\begin{equation}\label{eq:CharacDetDAE}
  \boxed{\Pi_p \cQ_q^{p} = \{0 \}.}
\end{equation}
\end{Theorem}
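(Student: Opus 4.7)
The plan is to prove the set identity \eqref{eq:impFwdObsMult} by induction on $p$ with $q$ fixed, and then to derive the determinability characterization as a straightforward consequence of Proposition~\ref{prop:fwdObsZero} together with Lemma~\ref{lem:consProj}. The induction machinery is driven by a single-switch analysis (Proposition~\ref{prop:fwdObsSing}) glued together across successive intervals by Lemmas~\ref{lem:consProj} (consistency jump), \ref{lem:diffProj} (smooth flow), and \ref{lem:impulses} (impulsive response).

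For the base case $p=q+1$, the definition $\cQ_q^{q+1}=\cW_{q+1}$ reduces the claim directly to Proposition~\ref{prop:fwdObsSing} applied to the interval $(t_q,t_{q+1}]$. For the inductive step, assuming the identity at level $p-1$, I would show both inclusions. For ``$\supseteq$'': given a zero-output solution $(e,y^e)$ on $(t_q,t_p]$, its restriction to $(t_{p-1},t_p]$ gives $e(t_p^-)\in\cW_p$ by Proposition~\ref{prop:fwdObsSing}, its restriction to $(t_q,t_{p-1}]$ gives $e(t_{p-1}^-)\in\cQ_q^{p-1}$ by the inductive hypothesis, and Lemmas~\ref{lem:consProj}--\ref{lem:diffProj} yield $e(t_p^-)=\me^{A_{p-1}^\diff \tau_{p-1}}\Pi_{p-1} e(t_{p-1}^-)$, so $e(t_p^-)\in\cQ_q^p$. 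For ``$\subseteq$'': given $e^*\in\cQ_q^p$, write $e^*=\me^{A_{p-1}^\diff \tau_{p-1}}\Pi_{p-1}\tilde e$ with $\tilde e\in\cQ_q^{p-1}$, pick (by the inductive hypothesis) a zero-output solution on $(t_q,t_{p-1}]$ with $e(t_{p-1}^-)=\tilde e$, and extend it forward by the switched-system dynamics. The zero-output verification on $(t_{p-1},t_p]$ is then threefold: the smooth output vanishes because $e^*\in\ker O_{p-1}^\diff$ and this kernel is $A_{p-1}^\diff$-invariant (hence invariant under $\me^{A_{p-1}^\diff t}$), so $\Pi_{p-1}\tilde e\in\ker O_{p-1}^\diff$; the impulse at $t_p$ vanishes because $e^*\in\ker O_p^\imp$ via Lemma~\ref{lem:impulses}; and the impulse at $t_{p-1}$ already vanishes because $\tilde e\in\cQ_q^{p-1}\subseteq\cW_{p-1}\subseteq\ker O_{p-1}^\imp$.

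For the equivalence between $(t_q,t_p]$-determinability and \eqref{eq:CharacDetDAE}, I would invoke Proposition~\ref{prop:fwdObsZero} to reduce determinability to the implication $y^e_{(t_q,t_p]}\equiv 0 \Rightarrow e_{(t_p,\infty)}=0$; by regularity of $(E_p,A_p)$ and the remark following Definition~\ref{defn:detObs}, the right-hand condition is equivalent to $e(t_p^+)=0$, and Lemma~\ref{lem:consProj} gives $e(t_p^+)=\Pi_p e(t_p^-)$. Feeding \eqref{eq:impFwdObsMult} in shows that the set of attainable $e(t_p^+)$ under a zero output is exactly $\Pi_p\cQ_q^p$, so determinability is equivalent to $\Pi_p\cQ_q^p=\{0\}$.

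The main obstacle I anticipate is the ``$\subseteq$'' half of the inductive step: one must ensure that when the past (zero-output) solution is concatenated with the forward extension through the switch at $t_{p-1}$, no output — smooth or impulsive — is introduced anywhere on $(t_q,t_p]$, and in particular that the three conjuncts packaged into $\cW_p$ and the invariance of $\ker O_{p-1}^\diff$ under the differential flow are used in exactly the right combination. Everything else is bookkeeping on top of Proposition~\ref{prop:fwdObsSing} and the structural lemmas.
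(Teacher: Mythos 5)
Your proposal is correct and follows essentially the same route as the paper's proof: both directions of \eqref{eq:impFwdObsMult} are established by gluing Proposition~\ref{prop:fwdObsSing} across switches via Lemmas~\ref{lem:consProj} and \ref{lem:diffProj} (with the $A^\diff$-invariance of $\ker O^\diff$ and $\fC$ doing the work in the ``$\subseteq$'' construction), and the determinability equivalence then follows from Proposition~\ref{prop:fwdObsZero}. The only difference is organizational — you recurse on $p$ and let the inductive hypothesis supply the zero-output solution on $(t_q,t_{p-1}]$, whereas the paper builds the whole backward chain $e_p,e_{p-1},\dots,e_{q+1}$ explicitly before flowing forward — which is not a substantive change.
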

The proof is given in \ref{app:proofs}. Note that the determinability characterization \eqref{eq:CharacDetDAE} is significantly weaker than the condition $\cQ_q^p = \{0\}$ used in \citep{TanwTren13}. In fact, in the latter we aimed to recover $e(t_p^-)$, while for determinability it suffices to determine $e(t_p^+)=\Pi_p e(t_p^-)$ and any uncertainty within  $\ker \Pi_p$ is irrelevant for determinability in the sense of Definition~\ref{defn:detObs}.

\subsection{An illustrative example}\label{sec:example}
To illustrate the above theoretical results, consider an academic example given by \eqref{eq:sysLin} with a periodic switching signal $\sigma$. The first switching time is $t_0 = 0$ and, for $k\in\N$,
\[
    \tau_{4k} = 1,\ \tau_{4k+1} = \pi/4,\ \tau_{4k+2} = 1,\ \tau_{4k+3} = 1.
\]
For $p=0,1,2,3$, the subsystem which is active on the interval $[t_{4k+p},t_{4k+p+1})$ is described by $(E_{4k+p},A_{4k+p},B_{4k+p},C_{4k+p})$, $k \in \N$, defined as,
\[\begin{aligned}
  p=0&:\ \left(I,\begin{smallbmatrix}  -1 & 1 & 0 & 0\\ 0 & 0 & 0 & 0\\ 0 & 0 & 0 & 0\\ 1 & 0 & 0 & 0 \end{smallbmatrix}, \begin{smallbmatrix} 0\\ 1\\ 0\\ 0 \end{smallbmatrix}, \begin{smallbmatrix} 1 & 0 & 0 & 0 \end{smallbmatrix} \right),\\
  p=1&:\ \left(I,\begin{smallbmatrix}  0 & 0 & 0 & 0\\ 0 & 0 & 1 & 0\\ 0 & -1 & 0 & 0\\ -1 & -1 & -1 & -1 \end{smallbmatrix}, \begin{smallbmatrix} 0\\ 0\\ 1\\ 0  \end{smallbmatrix}, \begin{smallbmatrix} 0 & 0 & 0 & 0 \end{smallbmatrix} \right),\\
  p=2&:\ \left(\begin{smallbmatrix} 1 & 0 & 0 & 0\\ 0 & 1 & 0 & 0\\ 0 & 0 & 0 & 0\\ 0 & 0 & 1 & 0 \end{smallbmatrix},\begin{smallbmatrix}  0 & 0 & 0 & 0\\ 0 & 0 & 0 & 0\\ 0 & 0 & 1 & 0\\ 0 & 0 & 0 & 1  \end{smallbmatrix}, \begin{smallbmatrix} 0\\ 0\\ 0\\ 1  \end{smallbmatrix}, \begin{smallbmatrix} 0 & 0 & 0 & 1 \end{smallbmatrix} \right),\\
  p=3&:\ \left(I,\begin{smallbmatrix}  0 & 0 & 0 & 0\\ 1 & -1 & 0 & 0\\ 1 & 0 & 0 & 0\\ 1 & 0 & 0 & 0 \end{smallbmatrix}, \begin{smallbmatrix} 1\\ 0\\ 0\\ 0 \end{smallbmatrix}, \begin{smallbmatrix} 0 & 1 & 0 & 0 \end{smallbmatrix} \right).\\
  \end{aligned}
\]
The subsystems indexed by $p=0,1,3$ are actually ODEs and the one given by $p=2$ is already in quasi-Weierstrass form \eqref{eq:QWF}. Hence, $T_p=S_p=I$, for each $p$. The matrices appearing in Definition~\ref{def:proj}, for $p=0,1,3$, are
\[
   \Pi_p = I,\ A^\diff_p = A,\ C^\diff_p = C,\ E^\imp_p = 0,
\]
and, for $p=2$, are
\[
   \Pi_2 = \begin{smallbmatrix} 1 & 0 & 0 & 0\\ 0 & 1 & 0 & 0\\ 0 & 0 & 0 & 0\\ 0 & 0 & 0 & 0 \end{smallbmatrix},\ A^\diff_2 = 0,\ C^\diff_2 = 0,\  E^\imp_2 = E.
\]
The corresponding locally unobservable spaces are
\[\begin{aligned}
   \cW_1 &= \im \begin{smallbmatrix} 0 & 0\\ 0 & 0\\ 1 & 0\\ 0 & 1 \end{smallbmatrix},&
   \cW_2 &= \im \begin{smallbmatrix} 1 & 0 & 0\\ 0 & 1 & 0\\ 0 & 0 & 0\\ 0 & 0 & 1 \end{smallbmatrix},\\
   \cW_3 &= \im \begin{smallbmatrix} 1 & 0\\ 0 & 1\\ 0 & 0\\ 0 & 0 \end{smallbmatrix},&
   \cW_4 &= \im \begin{smallbmatrix} 0 & 0\\ 0 & 0\\ 1 & 0\\ 0 & 1 \end{smallbmatrix},
  \end{aligned}
\]
where $\im M$ denotes the image of the linear map induced by the matrix $M$ (i.e., its column space).
The $(t_q,t_p]$-undeterminable spaces $\cQ_q^p$ with $\Pi_p Q_q^p = 0$ are as follows
\[
   Q_0^2 = \im \begin{smallbmatrix} 0 \\ 0 \\ 0 \\ 1 \end{smallbmatrix}, \quad Q_2^4 =  \{0\}.
\]
In particular, the switched DAE is determinable on each of the intervals $(t_{4k},t_{4k+2}]$, $(t_{4k+2},t_{4k+4}]$, $k\in\N$.
%

\section{Reconstructing the State}\label{sec:outMaps}

Based on the determinability notions studied in the previous section, we are now interested in knowing how the state of the homogenous error system \eqref{eq:sysHomLin} can actually be reconstructed from the information obtained by (nonzero) output error measurements.
We approach the problem of state reconstruction in the same manner as we derived the determinability conditions in the previous section.
This approach involves two major steps:
\begin{enumerate}
\item Reconstruct the observable components at each switching time from the local (in time) information.
\item Combine the information from each subsystem to find the whole state.
\end{enumerate}

Once we have derived a method to theoretically compute the state of the error dynamics exactly, we can then take additional practical elements, such as estimation errors, into account.

\subsection{Observable Component at a Switching Instant}\label{sec:mapOne}

The first step in reconstructing the state of the error dynamics is to be able to write down a systematic way of extracting the observable information around each switching time from locally active subsystems.
Based on the discussion in the previous section, we know that for any solution $(e,y^e)$ of \eqref{eq:sysHomLin} with $y^e_{(t_0,t_1)} = 0$ and $y^e[t_1] = 0$, the state $e(t_1^-)$ is contained in the locally unobservable space $\cW_1$ given by \eqref{eq:W1}. For a general solution $(e,y^e)$ with \emph{nonzero} $y^e$ the corresponding state vector $e(t_1^-)$  is not contained in $\cW_1$.

\emph{In the following, we will decompose $e(t_1^-)$ according to the direct sum $\R^n=\cW_1\oplus\cW_1^\bot$.} For that, let us introduce  orthonormal matrices $W_1$ and $Z_1$ such that $\im W_1 = \cW_1$ and $\im Z_1 = \cW_1^\bot$. Let $d_{w_1}$ denote the dimension of $\cW_1$, then there exist unique elements $w_1 \in \R^{d_{w_1}}$, and $z_1 \in \R^{n-d_{w_1}}$ such that
\begin{equation}\label{eq:xZW}
e(t_1^-) = W_1 w_1 + Z_1 z_1.
\end{equation}
Because of orthonormality of the matrix $Z_1$, multiplying the above equation from left by $Z_1^\top$, we get $z_1 = Z_1^\top e(t_1^-)$.
Similarly, multiplication from left by $W_1^\top$ gives $w_1=W_1^\top e(t_1^-)$.
It is noted that $w_1$ is the unobservable component of $e(t_1^-)$ and $z_1$ the component of the vector $e(t_1^-)$ that can be recovered from measuring $y^e_{(t_0,t_1)}$ and $y^e[t_1]$. In the sequel, we construct a mapping that relates the vector $z_1$ to the measurements of $y_{(t_0,t_1]}^e$.

We first make the observation that
\begin{align*}
\cW_1^\bot & = {(\fC_0 \cap \ker O_0^\diff \cap \ker O_1^{\imp})}^\bot \\
& = \fC_0^\bot + \im\left({O^\diff_0}^\top\right) + \im\left({O_{1}^{\imp}}^\top\right),
\end{align*}
that is, $\cW_1^\bot$ is a sum of three subspaces. Since $z_1$ corresponds to the projection of $e(t_1^-)$ onto the subspace $\cW_1^\bot$, we further decompose the vector $z_1$ along each of the three constituent subspaces of $\cW_1^\bot$, which are subsequently denoted by $z_1^\cons$, $z_1^\diff$, and $z_1^\imp$.
More formally, we let $Z_0^\cons$, $Z_0^\diff $, and $Z_1^\imp $ be matrices with orthonomal columns such that
\[\begin{aligned}
   & \im Z_0^\cons = \fC_0^\bot, & \quad & z_1^\cons := {Z_0^{\cons}}^\top e(t_1^-),\\
   & \im Z_0^\diff = \im\left({O^\diff_0}^\top\right), & \quad & z_1^\diff := {Z_0^{\diff}}^\top e(t_1^-), \\
   & \im Z_1^\imp = \im\left({O^\imp_1}^\top\right), & \quad & z_1^\imp := {Z_1^{\imp}}^\top e(t_1^-).
\end{aligned}
\]
Note that the images of the matrices $Z_0^\cons$, $Z_0^\diff$ and $Z_1^\imp$ might intersect non-trivially with each other. In this case, some part of the unknown error $e(t_1^-)$ can be determined from the consistency or classically differentiable part as well as from the impulsive information. From a mathematical point of view this redundancy can be eliminated by choosing a full column rank matrix $U_1$ such that 
\begin{equation}\label{eq:zbardeff}
[ Z_0^\cons \, , \ Z_0^\diff \, , \ Z_1^\imp ]\, U_1 = Z_1,
\end{equation}
and for brevity we let $\overline{Z}_1:=[ Z_0^\cons \, , \ Z_0^\diff \, , \ Z_1^\imp ]$.
We thus obtain,
\begin{equation}\label{eq:z1Comp}
z_1 = U_1^\top \overline{Z}_1^\top e(t_1^-) = U_1^\top \begin{pmatrix} z_1^\cons / z_1^\diff / z_1^\imp \end{pmatrix}.
\end{equation}
We now describe, how each component of the vector $z_1$ can be calculated.

\emph{The consistency information $z_1^\cons$:} 
By definition, it holds that ${Z_0^\cons}^\top \fC_0 = \{0\}$ and hence $z_1^\cons = {Z_0^{\cons}}^\top e(t_1^-) = 0$ because any solution of the homogenous DAE \eqref{eq:sysHomLin} evolves within the consistency space $\fC_0$ on the interval $(t_0,t_1)$.

\emph{Mapping for the differentiable part $z_1^\diff$:} In order to define $z_1^\diff \in \R^{r_0}$, where $r_0 = \rank O_0^\diff$, we first introduce the function $\mathbf{z}_0^\diff:(t_0,t_1) \rightarrow \R^{r_0}, t\mapsto {Z_0^\diff}^\top e(t)$, which represents the observable component of the subsystem $(E_0, A_0, C_0)$ that can be recovered from the smooth output measurements $y^e$ over the interval $(t_0,t_1)$. It follows from Lemma~\ref{lem:reduced_obs} in \ref{app:lemmas} that the evolution of $\mathbf{z}_0^\diff$ is governed by an observable ODE
\begin{equation}\label{eq:zsys}
\begin{aligned}
 \dot{\mathbf z}_0^\diff &= S_0^\diff \mathbf{z_0}^\diff, \\
  y^e &= R_0^\diff {\mathbf z}_0^\diff, 
\end{aligned}
\end{equation}
where $S_0^\diff:={Z_0^\diff}^\top A_0^\diff Z_0^\diff$ and $R_0^\diff:=C_0^\diff Z_0^\diff$. Because of the observability of the pair $(S_0^\diff, R_0^\diff)$ in \eqref{eq:zsys}, there exists a (linear) map $\cO_{(t_0,t_1)}^\diff$ such that
\[
\mathbf z_0^\diff = \cO_{(t_0,t_1)}^\diff (y^e_{(t_0,t_1)})
\]
and we set
\[
z_1^\diff = \mathbf z_0^\diff(t_1^-).
\]
From the vast literature that exists on observability of classical linear time-invariant systems, one can find various ways for representing and approximating the map $\cO^\diff_{(t_0,t_1)}$. For our purposes, this choice is not essential.
We are just interested in knowing that there exist techniques which allow us to (approximately) recover $z_1^\diff$ and in the design of our estimators in Section~\ref{sec:obsDesign}, it will be specified what (approximation) properties are required.

\emph{Mapping for the impulsive part $z_1^\imp$:}
A particular characteristic of the switched DAEs is that the different algebraic constraints for different modes may lead to Dirac impulses in the solution trajectories at switching times. If such impulses are observed in the output, then this information can be used to recover a certain part of the state $e(t_1^-)$. The impulsive part of the output at switching time $t_1$ can be represented as
\[
   y^e[t_1] = \sum_{i=0}^{n-2} \eta_1^i \delta_{t_1}^{(i)},
\]
where due to Lemma~\ref{lem:impulses} the coefficients $\eta_1^i$ satisfy the relation $\boldsymbol{\eta}_1 = -O_{1}^{\imp} e(t_1^-)$, with $\boldsymbol{\eta}_1 := ({\eta_1^0}/\cdots/{\eta_1^{n-2}})\in\R^{(n-1) \dy}$.
We now want to find a linear map $\cO_{[t_1]}^\imp$ such that
$
z_1^\imp = \cO_{[t_1]}^\imp (y^e[t_1]).
$
For that, we chose a matrix $U_1^\imp$ such that
\begin{equation}\label{eq:U1imp}
   -{O_{1}^{\imp}}^\top U_1^\imp = Z_1^\imp,
\end{equation}
 then
\begin{equation}\label{eq:z1imp}
  z_1^\imp = Z_1^{\imp^\top} e(t_1^-) = - U_1^{\imp^\top} O_{1}^{\imp}  e(t_1^-) = {U_1^\imp}^\top \boldsymbol{\eta}_1, 
\end{equation}
hence $\cO_{[t_1]}^\imp$ is given by:
\[
    y^e[t_1]= \sum_{i=0}^{n-2} \eta_1^i \delta_{t_1}^{(i)}\quad \mapsto\quad {U_1^\imp}^\top (\eta_1^0/\ldots/\eta_1^{n-2}).
\]

\subsection{Construct Global Mapping from Local Mappings}\label{sec:mapComb}

In the previous subsection, we described how the information over an interval with one switch can be combined with the data of active subsystems over that interval to reconstruct the observable information from that data at the switching time.
This can obviously be done at each switching time $t_k$, $k\in\N$, that is, by looking at the intervals $(t_{k-1},t_k]$ and measuring $y^e_{(t_{k-1},t_k)}$ and $y^e[t_k]$, one can repeat the procedure in Section~\ref{sec:mapOne} to compute $z_k$ where
\begin{equation}\label{eq:zpZp}
z_k = Z_k^\top e(t_k^-)
\end{equation}
and $Z_k$ is an orthonormal matrix with range space $\cW_k^\bot$.
Because of our notion of determinability, the next step is to be able to combine these local informations obtained at each switching time to reconstruct the entire state of the system \eqref{eq:sysHomLin}.
More specifically, assuming that \eqref{eq:sysHomLin} is $(t_q,t_p]$-determinable for some $q,p\in \N$, we are interested in finding a linear map $\cO_q^p$ such that
\[
e(t_p^+) = \Pi_p\cO_q^p (\mathbf{z}_{q+1}^p),
\]
where $\mathbf{z}_{q+1}^p := (z_{q+1}/z_{q+2}/\ldots/z_p)$. We next construct this map $\cO_q^p$.
A schematic representation of the development of this section is given in Figure~\ref{fig:combInfo}.

\begin{figure}[hbt]
\centering
\scalebox{0.9}{\begin{tikzpicture}
\draw (0,0) rectangle (7,5);
\draw[fill = red!20] (0,0) rectangle (2,5);
\draw (1,1.5) node [text width = 1.8cm, align = center, inner sep = 0] {\footnotesize Undet.\ states over $(t_q,t_{q+1}]$};
\draw (1,3) node [fill=green!20, rectangle, draw, align=center, text width = 2cm, minimum height = 0.5cm, inner sep = 0, text centered] (z1) {$z_{q+1}$};
\draw[fill = red!20] (2,0) rectangle (4,5);
\draw (3,2) node [text width = 1.8cm, align = center, inner sep = 0] {\footnotesize Undet.\ states over $(t_{q+1},t_{q+2}]$};
\draw (3,4) node [fill=green!20, rectangle, draw, align=center, text width = 2cm, minimum height = 0.5cm, inner sep = 0, text centered] (z2) {$z_{q+2}$};
\draw (4.5,2) node [rectangle, align=center, inner sep = 0, text width = 1cm, text centered] (dots) {$\dots$};
\draw[fill = red!20] (5,0) rectangle (7,5);
\draw (6,2) node [text width = 1.8cm, align = center, inner sep = 0] {\footnotesize Undet.\ states over $(t_{p-1},t_{p}]$};
\draw (6,0.5) node [fill=green!20, rectangle, draw, align=center, text width = 2cm, minimum height = 0.5cm, inner sep = 0, text centered] (zm) {$z_{p}$};
\draw (8.5,2.5) node [rotate=-90, fill= green!20, rectangle, draw, align=center, text width = 5cm, minimum height = 1cm, inner sep = 0, text centered] (xi1) {\small $e(t_p^+) = \Pi_p\cO_q^p (z_{q+1}, z_{q+2}, \dots, z_p)$};
\draw[thick,->] (z1.east) -- +(6,0);
\draw[thick,->] (z2.east) -- +(4,0);
\draw[thick,->] (zm.east) -- +(1,0);
\draw[->] (0,-0.5) -- (9,-0.5) node [anchor=west] {$t$};
\draw (0,-0.4) -- (0,-0.6) node [anchor = north] {$t_q$}; 
\draw (2,-0.4) -- (2,-0.6) node [anchor = north] {$t_{q+1}$}; 
\draw (4,-0.4) -- (4,-0.6) node [anchor = north] {$t_{q+2}$}; 
\draw (5,-0.4) -- (5,-0.6) node [anchor = north] {$t_{p-1}$}; 
\draw (7,-0.4) -- (7,-0.6) node [anchor = north] {$t_{p}$}; 
\end{tikzpicture}}
\caption{Accumulating local information $z_k$, $k=q,q+1,\ldots,p$ at time instant $t_p$ to calculate $e(t_p^+)$.}
\label{fig:combInfo}
\end{figure}
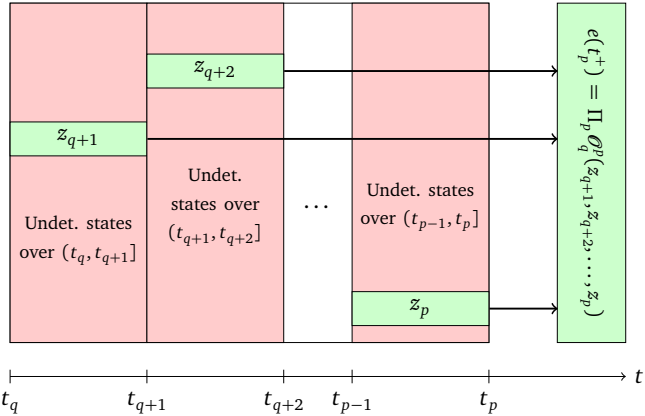

For $k\in\N$ with $q < k \leq p $ let $P_{q}^{k}$ and $Q_{q}^{k}$ be matrices with orthonormal columns such that
\[
   \im Q_q^k = \cQ_{q}^{k}\quad\text{and}\quad \im P_q^k = (\cQ_{q}^{k})^\bot,
\]
where $\cQ_q^k$ is recursively defined by \eqref{eq:DAESeqDet} and is the subspace of all points $e(t_k^-)$ which cannot be determined from $y^e_{(t_q,t_k]}$. There exist unique vectors $\varphi_q^k$, and $\chi_q^k$ such that 
\begin{equation}\label{eq:xPkQk}
   e(t_{k}^-) = P_{q}^{k} \varphi_{q}^{k} + Q_{q}^{k} \chi_{q}^{k}.
\end{equation}
Because of orthonormality of the matrices $P_q^k$ and $Q_q^k$, we have that $\varphi_{q}^{k} = {P_{q}^{k}}^\top e(t_k^-)$ and $\chi_{q}^{k} = {Q_{q}^{k}}^\top e(t_{k}^-)$. In particular, if \eqref{eq:sysHomLin} is $(t_q,t_p]$-determinable, we have $\cQ_q^p \subseteq \ker \Pi_p$ and therefore 
\[
   e(t_p^+) = \Pi_p P^p_q \varphi_q^p,
\]
i.e.\ the state $e(t_p^+)$ can be recovered if we are able to find an expression for $\varphi_q^k$, $k=q+1,q+2,\ldots,p$.
Note that $\varphi_q^{q+1} = z_{q+1}$ corresponds to the determinable information from the interval $(t_q,t_{q+1}]$, and we already discussed in \ref{sec:mapOne} how it can be obtained. We now derive a recursive expression for $\varphi_q^k$, $k=q+2,q+3,\ldots,p$ in terms of $\varphi_q^{k-1}$.

For that we need to introduce a matrix $\Theta_q^{k-1}$ with orthonormal columns such that
\begin{equation}\label{eq:Thetaqk}
   \im \Theta_q^{k-1} = \left(\me^{A_{k-1}^\diff\tau_{k-1}}\Pi_{k-1} \cQ_{q}^{k-1}\right)^\bot.
\end{equation}
Note that then by definition
\[
    \im P_q^k = {\cQ_q^k}^\bot = \im [Z_k, \Theta_q^{k-1}],
\]
in particular there is a matrix $U_q^k$ such that
\[
   P_q^k = [Z_k, \Theta_q^{k-1}] U_q^k.
\]
From
\begin{align}
  e (t_k^-)  & = \me^{A_{k-1}^\diff \tau_{k-1}} \Pi_{k-1} \ e(t_{k-1}^-)\notag \\
    & = \me^{A_{k-1}^\diff \tau_{k-1}} \Pi_{k-1} \left( P_{q}^{k-1} \varphi_{q}^{k-1} +  Q_{q}^{k-1} \chi_{q}^{k-1} \right), \label{eq:err2b}
\end{align}
together with
\[
   {\Theta_{q}^{k-1}}^\top \me^{A_{k-1}^\diff\tau_{k-1}}\Pi_{k-1} Q_{q}^{k-1} = 0.
\]
we can conclude that
\begin{align}
   \varphi_q^k &= {P_q^k}^\top e(t_k^-) =  {U_q^k}^\top \begin{bmatrix} Z_k^\top \\ {\Theta_q^{k-1}}^\top \end{bmatrix} e(t_k^-)\notag\\
   &=   {U_q^k}^\top \begin{pmatrix} z_k \\ {\Theta_q^{k-1}}^\top \me^{A_{k-1}^\diff \tau_{k-1}} \Pi_{k-1} P_{q}^{k-1} \varphi_{q}^{k-1} \end{pmatrix},\label{eq:phiqk}
\end{align}
which is the desired recursion formula for $\varphi_q^k$ for $k=q+2,q+3,\ldots,p$ with ``initial value'' $\varphi_q^{q+1} = z_{q+1}$.

We have thus arrived at the following result:
\begin{Theorem}\label{thm:mapZs}
Consider the homogenous switched DAE \eqref{eq:sysHomLin} with corresponding $A^\diff_p,\Pi_p$ as in Definition~\ref{def:proj}, $\tau_p:=t_{p+1}-t_p$, $p\in\N$, and assume $(t_q,t_p]$-determinability for some $0\leq q < p$. Furthermore, consider for $k=q+2,q+3,\ldots,p$ the matrices $U_q^k$, $\Theta_q^{k-1}$, $P_q^{k-1}$ as above and let
\[\begin{aligned}
   F_q^k &:= {U_q^k}^\top  \begin{bmatrix} I \\ 0 \end{bmatrix},\quad\text{and}\quad F_q^{q+1}:=I,\\
   G_q^k &:= {U_q^k}^\top  \begin{bmatrix}0 \\ {\Theta_q^{k-1}}^\top \me^{A_{k-1}^\diff \tau_{k-1}} \Pi_{k-1} P_q^{k-1}\end{bmatrix}
   \end{aligned}
\]
and with $\mathbf{z}_{q+1}^p=(z_{q+1},z_{q+2},\cdots,z_p)$ let \footnote{
For $q+1 \le k \le p-1$, the notation $\left(\prod_{j=k+1}^{p}G_q^{j} \right)$ denotes the product of matrices $G_q^p \, G_q^{p-1} \dots \, G_q^{k+1}$ (note the decreasing order).
By convention, when $p \le k$, the resulting product is set to identity.
}
\begin{equation}\label{eq:Oqp}
   \cO_q^p(\mathbf{z}_{q+1}^p) := P_q^p \sum_{k=q+1}^p \left(\prod_{j=k+1}^p G_q^j\right) F_q^k z_k.
\end{equation}
Then
\[
  \boxed{e(t_{p}^+) = \Pi_p\cO_q^{p}(\mathbf{z}_{q+1}^{p})},
\]
i.e., the linear map $\cO_q^p$ describes how the state $e(t_p^+)$ can be recovered from the knowledge of locally observable parts $z_{q+1}$, $z_{q+2}$, \ldots, $z_p$, for which the construction was provided in Section~\ref{sec:mapOne}.
\end{Theorem}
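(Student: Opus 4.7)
The plan is to observe that essentially all of the substantive work has already been carried out in the development preceding the theorem; the statement is a compact repackaging of the recursion \eqref{eq:phiqk} together with the reduction provided by $(t_q,t_p]$-deter\-min\-ability. I would therefore treat the proof as a short bookkeeping argument split into two steps: (i) kill the undeterminable component using $\Pi_p$, and (ii) unroll the recursion for $\varphi_q^p$.

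For step (i), I would invoke Theorem~\ref{thm:DAEDetMult}: $(t_q,t_p]$-determinability is equivalent to $\Pi_p\cQ_q^p=\{0\}$, so that $\Pi_p Q_q^p\chi_q^p = 0$ for any $\chi_q^p$. Combining this with the orthogonal decomposition \eqref{eq:xPkQk} at $k=p$ and Lemma~\ref{lem:consProj} gives
\[
   e(t_p^+) = \Pi_p e(t_p^-) = \Pi_p P_q^p \varphi_q^p + \Pi_p Q_q^p \chi_q^p = \Pi_p P_q^p\varphi_q^p.
\]
Thus it suffices to show $P_q^p \varphi_q^p = \cO_q^p(\mathbf{z}_{q+1}^p)$, i.e.\ that $\varphi_q^p$ coincides with the double sum appearing in \eqref{eq:Oqp} (without the leading $P_q^p$).

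For step (ii), I would rewrite the recursion \eqref{eq:phiqk} by splitting the stacked vector as $(z_k \,/\, *) = (z_k \,/\, 0) + (0 \,/\, *)$, so that with the definitions of $F_q^k$ and $G_q^k$ in the theorem statement one has the clean one-step recursion $\varphi_q^k = F_q^k z_k + G_q^k \varphi_q^{k-1}$ for $k\geq q+2$, together with the initialization $\varphi_q^{q+1} = z_{q+1} = F_q^{q+1} z_{q+1}$ (using the convention $F_q^{q+1}:=I$). A straightforward induction on $k$, or equivalently a telescoping of the recursion, yields
\[
   \varphi_q^k = \sum_{j=q+1}^{k}\Bigl(\prod_{i=j+1}^{k} G_q^i\Bigr) F_q^j z_j,
\]
with the empty product at $j=k$ understood to be the identity. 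Specializing to $k=p$, left-multiplying by $\Pi_p P_q^p$, and using \eqref{eq:Oqp} gives $e(t_p^+)=\Pi_p \cO_q^p(\mathbf{z}_{q+1}^p)$, which is the assertion.

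The proof is essentially routine and I do not anticipate a genuine obstacle; the only point requiring a little care is the ordering convention in the product $\prod_{j=k+1}^{p} G_q^j$, which must be taken in decreasing order (as already flagged in the footnote to the theorem) since the recursion propagates $\varphi_q^{k-1}$ forward by successive left-multiplications $G_q^k, G_q^{k+1},\ldots,G_q^p$. Non-commutativity of the $G_q^j$ makes this the one spot where a careless induction would go wrong, so I would state the induction hypothesis explicitly with the ordered product to avoid any ambiguity.
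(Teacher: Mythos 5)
Your proposal is correct and follows essentially the same route as the paper: the paper derives the decomposition \eqref{eq:xPkQk}, notes that determinability gives $e(t_p^+)=\Pi_p P_q^p\varphi_q^p$, establishes the recursion \eqref{eq:phiqk}, and then states the theorem as the unrolled form of that recursion, which is exactly your steps (i) and (ii). Your explicit attention to the decreasing order of the non-commuting product $\prod_{j=k+1}^p G_q^j$ is the only detail the paper leaves implicit, and you handle it correctly.
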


\begin{Remark}[Dependence on switching signal]
  While the reconstruction of the locally observable component $z_k$ at the switching time $t_k$ only depends on the two modes that are active prior and after the switch, i.e., $(E_{k-1},A_{k-1},C_{k-1})$ and $(E_k,A_k,C_k)$, the overall reconstruction of the state, as in Theorem \ref{thm:mapZs}, additionally depends on the duration $\tau_k=t_{k+1}-t_k$ of each mode, because the matrix $\Theta_q^{k-1}$ used in the construction depends on $\tau_{q+1}, \dots, \tau_{k-1}$. Hence the map $\cO_q^p$ depends on $\tau_{q+1}, \tau_{q+2}, \ldots, \tau_p$ (but not on $\tau_q$).
\end{Remark}

\subsection{Example}
We revisit our example from Section~\ref{sec:example} and recall that it is $(t_0,t_2]$ determinable, so that the above derivation can be used to recover the state $e(t_2^+)$ of the homogenous switched DAE \eqref{eq:sysHomLin} from the output $y^e$ on $(t_0,t_2]$. For the reconstruction via \eqref{eq:phiqk}, the following matrices are needed:
\[
   P_0^1 = \begin{smallbmatrix} 1 & 0 \\ 0 & 1 \\ 0 & 0 \\ 0 & 0 \end{smallbmatrix},\
   \Theta_0^1 = \begin{smallbmatrix} 0 & 1 \\ -\sqrt{2}/2 & 0 \\ \sqrt{2}/2 & 0 \\ 0 & 0\end{smallbmatrix},\ 
   U_0^2 = \begin{smallbmatrix} 0 & 1 & 0 \\0 & -\sqrt{2} & 0 \\ 1 & 0 & 0 \end{smallbmatrix},
\]
\[
   \me^{A^\diff_1\tau_1}\Pi_1 \approx \begin{smallbmatrix} 1 & 0 & 0 & 0 \\ 0 & \sqrt{2}/2 & \sqrt{2}/2 & 0 \\ 0 & -\sqrt{2}/2 & \sqrt{2}/2 & 0 \\ -0.544 & -0.251 & -\sqrt{2}/2 & 0.456 \end{smallbmatrix}.
\]
Furthermore, for estimating $z^\diff_1$ via \eqref{eq:zsys}, an observer for the following ODE has to be implemented:
\[
   \dot{\mathbf{z}}_0^\diff = \begin{smallbmatrix} -1 & 0 \\ 0 & 0 \end{smallbmatrix} \mathbf{z}_0^\diff, \quad y^e = \begin{smallbmatrix} 1 & 0 \end{smallbmatrix} \mathbf{z}_0^\diff.
\]
Finally, for estimating $z^\imp_2$ via \eqref{eq:z1imp} from the impulses in the output, we use the compression matrix $U_2^\imp = [1\ 0\ 0\ 0]^\top$.

\section{Design of State Estimators}\label{sec:obsDesign}

In the previous section, we provided a method for reconstructing the state of the homogenous error dynamic system (without inputs) by constructing the mappings that exist between the output error and the observable components of the individual system.
In practice, these mappings are not numerically robust, or physically realizable, and we are thus interested in obtaining estimates of the state trajectories through numerically robust algorithms.
So, the purpose of this section is to design an observer for the system class \eqref{eq:sysLin} under the interval wise determinability assumption introduced in Section~\ref{sec:obsCond}, which generates asymptotically convergent state estimates.
The underlying structure of these estimators is based on the result of Theorem~\ref{thm:mapZs} developed in Section~\ref{sec:outMaps}.

Our proposed observer for the system class~\eqref{eq:sysLin} is given by (see also Figure~\ref{fig:obsAll}):
\begin{subequations}\label{eq:obsSyn}
\begin{empheq}[box=\fbox]{align}
E_{\sigma} \dot{\widehat{x}}_p &= A_{\sigma} \widehat{x}_p + B_{\sigma} u, \text{ on } [t_p,t_{p+1}^+], \label{eq:obsSyna}\\
 \widehat{x}_{p}(t_{p}^-) &= \widehat{x}_{p-1}(t_p^-) - \xi_p \label{eq:obsSynb}, \quad p\in\N,
\end{empheq}
\end{subequations}
with arbitrary initial condition $\widehat{x}_0(t_0^-) \in \R^n$ and $[t_p,t_{p+1}^+]:=[t_p,t_{p+1}+\eps)$ for some arbitrarily small $\eps>0$. The desired estimate $\widehat x$ is defined as:
\[
   \widehat{x}:=\sum_{p \in \N} (\widehat{x}_p)_{[t_{p},t_{p+1})}.
\]

It is seen that the observer consists of a system copy and unlike classical methods where the continuous dynamics of the estimate are driven by an error injection term, {\em the observer \eqref{eq:obsSyn} updates the state estimate only at discrete switching instants by an error correction vector $\xi_p$}, which is determined by the difference between the system output $y=C_\sigma x$ and system copy output $\widehat{y}=C_\sigma \widehat{x}$. To give an intuitive interpretation of how to calculate $\xi_p$, note that, \emph{under the assumption} that for some $p \in \N$, the correction term $\xi_p$ satisfies
\[
   \Pi_p \xi_p = \Pi_p(\widehat{x}_{p-1}(t_p^-) - x(t_p^-)),
\]
with $x$ being a solution of \eqref{eq:sysLin},
the equation~\eqref{eq:obsSynb} gives $\widehat{x}_p(t_p^+) = \Pi_p\widehat{x}_p(t_p^-) = \Pi_p x(t_p^-) = x (t_p^+)$, and from there onwards the system copy \eqref{eq:obsSyn} with $\xi_k=0$, $k>p$ will follow exactly the original system, at least in theory.
 In reality, however, even after a perfect match at time $t_p$, the system copy will deviate again from the original system due to some uncertainties and we have to apply a correction $\xi_{\tilde{p}}$ at a later switching time $t_{\tilde{p}}$ again. It is not necessary (and may also not be possible) to apply this correction at every step. So our goal is to compute $\xi_p$ repeatedly, for ``sufficiently many'' $p \in \N$, such that $\Pi_p\xi_p$ approximates the value of state estimation error at time $t_p^+$ closely as $p$ gets large.
Since the growth of error between the reset times can be upper bounded by the solution of a linear ODE, the resets (under determinability assumption) allow us to make the estimation error at switching times arbitrarily small, which eventually results in convergence of $\widehat {x}(t)$ toward $x(t)$ as $t$ tends to infinity.

With this motivation, we introduce the state estimation error. Let $e_p := \widehat{x}_p - x$ denote the state estimation error on $[t_{p},t_{p+1})$ and $e:=\sum_p (e_p)_{[t_{p},t_{p+1})} = \widehat{x} - x$, then
\begin{subequations}\label{eq:errDyn}
\begin{align}
E_p \dot e_p &= A_p e_p,\quad\text{ on }[t_p,t_{p+1}), \label{eq:errDyna}\\
e_{p}(t_p^-) & = e_{p-1}(t_p^-) - \xi_{p}. \label{eq:errDynb}
\end{align}
\end{subequations}
Note that equations~\eqref{eq:obsSyna} and \eqref{eq:errDyna} are both to be interpreted in the sense of distributions, in particular, the impulsive parts $x_p[t_p]$ and $e_p[t_p]$ are uniquely determined by \eqref{eq:obsSyn} and \eqref{eq:errDyn}, respectively. However, the error dynamics \eqref{eq:errDyna} are homogenous and there are no impulses between two switches. As a result, the solution of \eqref{eq:errDyna} for $t \in (t_{p},t_{p+1})$ is described as
\begin{equation}\label{eq:errFlowOneSwitch}
\begin{aligned}
e(t) = e_p(t) & = \me^{A_p^\diff(t-t_{p})} \Pi_p e_p(t_{p}^-) \\
& = \me^{A_p^\diff(t-t_{p})} \Pi_p \left(e(t_{p}^-) - \xi_p \right).
\end{aligned}
\end{equation}
The output estimation error is
\[
y^e = C_p \widehat{x}_p - y
\]
on each open interval $(t_{p},t_{p+1})$. The impulsive error $y^e[t_p]$ at the switching times is obtained by the difference between $y[t_p]$ and the output impulse resulting from \eqref{eq:obsSyna} \emph{without} taking the correction $\xi_p$ into account, that is,
\begin{equation}\label{eq:impulse_error}
   y^e[t_p] := C_p \widehat{x}_{p-1}[t_p] -y[t_p]. \\
\end{equation}
Note that, in general,
$
y^e[t_p] \neq C_{p} \widehat{x}[t_p] - y[t_p],
$
because $\widehat{x}[t_p] = \widehat{x}_p[t_p] \neq \widehat{x}_{p-1}[t_p]$. In fact, the knowledge of $y^e[t_p]$ is based on the knowledge of $\widehat{x}_{p-1}$ which will be used to determine $\xi_p$, which in turn determines $C_p \widehat{x}_p[t_p]$.
Furthermore, note that $y[t_p]$ as well as $\widehat{x}_{p-1}[t_p]$ depends on $u[t_p]$ and the derivates $u^{(i)}(t_p^+)$ immediately after time $t_p$ (see \citep[Thm.~6.5.1]{Tren12}). This will render the observer slightly acausal, as the information immediately after $t_p$ is used to set the value of $\widehat{x}_p(t_p^-)$.
For DAEs, this is not surprising because the transfer functions are not necessarily proper and hence involve differentiation.
However, this is not a serious problem from an implementation-point-of-view because the system copy \eqref{eq:obsSyn} is not required to run synchronously to the original system.
Another way to overcome this issue is to assume that the input is smooth at the switching instants (i.e.\ jumps in the inhomogeneity are induced by a switching $B$-matrix), then $u^{(i)}(t_p^+)=u^{(i)}(t_p^-)$.

Now that we have described the homogenous (input-free) dynamics for the state estimation error, and the corresponding output equation, we are interested in computing the vector $\xi_p$ such that $\Pi_p \xi_p$ estimates $e_{p-1}(t_p^+) = \Pi_p e_{p-1}(t_p^-)$.
We make use of the analysis carried out in Section~\ref{sec:outMaps} to implement the following basic idea in calculating $\xi_p$:

\emph{Step 1:} Identify the observable component $z_p$ of the individual subsystems for the error dynamics \eqref{eq:errDyn}. For subsystem $p\in \N$, we let $Z_p$ be an orthonormal matrix with range space $\cW_p^\bot$, in particular, $z_p = Z_p^\top e(t_p^-)$. 

\emph{Step 2:} Under the assumption that for $p \in \N$, there exists a positive integer $q$ such that $(t_q, t_p]$-determinability holds, we derive a linear function $\Xi_{q}^{p-1}(\cdot)$ such that any solution $e$ of the error dynamics \eqref{eq:errDyn} satisfies
\begin{equation}\label{eq:preXi}
 \Pi_p e (t_p^-) = \Pi_p \left(\cO_{q}^p (\mathbf{z}_{q+1}^p) - \Xi_{q}^{p-1}(\boldsymbol{\xi}_{q+1}^{p-1})\right)
\end{equation}
where 
\[
   \boldsymbol{\xi}_{q+1}^{p-1} = (\xi_{q+1},\xi_{q+2},\ldots,\xi_{p-1}).
\]
and $\cO_q^p$ is given by \eqref{eq:Oqp}.

\emph{Step 3:} Finally, the estimates $\widehat z_k$ for the observable components $z_k$ are constructed at times $t_k^-$, $q+1 \leq k \le p$, and the error correction vector $\xi_p$ is defined as
\begin{equation}\label{eq:preErrC}
   \xi_p := \cO_{q}^p(\widehat{\mathbf{z}}_{q+1}^p) - \Xi_{q}^{p-1}(\boldsymbol{\xi}_{q+1}^{p-1}), \\
\end{equation}
where
$
   \widehat{\mathbf{z}}_{q+1}^p = (\widehat{z}_{q+1},\widehat{z}_{q+2},\ldots,\widehat{z}_p).
$

The ``error correction'' block is basically a compact representation of the structure given in Figure~\ref{fig:combInfo}, where additionally the effect of the state resets $\xi_p$ are taken into account.
In Section~\ref{sec:outMaps}, we have shown how to compute the map $\cO_{q}^p$ using the output measurements for a homogeneous system. For computing the error correction vector $\xi_p$, we use the same operator but applied to the estimates of the observable components.
Computations of the estimates $\widehat{z}_p$ of components of $z_p$, $p \in \N$, will be discussed in Section~\ref{sec:zEst}.
Before that, we close this section by deriving the second missing component for computing $\xi_p$ in \eqref{eq:preErrC}, which is the map $\Xi_{q}^{p-1}$ and it appears due to the state resets at previous switching instants \eqref{eq:errDynb}\footnote{This analysis can be skipped, if the state-resets are only applied at the end of a determinable interval $(t_q,t_p]$. However, in general we allow a reset of the estimator state at any switching time.}.
From \eqref{eq:preXi}, it is clear that the computation of $\Xi_{q}^{p-1}$ requires us to write $\Pi_p e(t_p^-)$ in terms of $z_{q+1}, \dots,z_{p}$, and $\xi_{q+1}, \dots, \xi_{p-1}$.
Analogously to the derivation leading to Theorem~\ref{thm:mapZs}, there exist unique vectors $\psi_q^k$ and $\chi_q^k$ such that, cf.\ \eqref{eq:xPkQk},
\[
   e(t_{k}^-) = P_{q}^{k} \psi_{q}^{k} + Q_{q}^{k} \chi_{q}^{k},\quad k=q+1,q+2,\ldots,p,
\]
where, as before, $P_q^k$ and $Q_q^k$ are orthonormal matrices whose columns span $(\cQ_q^k)^\bot$ and $\cQ_q^k$, respectively, and $\cQ_q^k$ is given by \eqref{eq:DAESeqDet}. Invoking
\[
   e(t_k^-) = \me^{A_{k-1}^\diff\tau_{k-1}} \Pi_{k-1} \left(e(t_{k-1}^-) - \xi_{k-1}\right), 
\]
we arrive at the following recursion formula, cf.\ \eqref{eq:phiqk},
\[
   \psi_q^k = {U_q^k}^\top \begin{pmatrix} z_k \\ {\Theta_q^{k-1}}^\top \me^{A_{k-1}^\diff \tau_{k-1}} \Pi_{k-1}\left(P_q^{k-1}\psi_q^{k-1} - \xi_{k-1}\right)\end{pmatrix}
\]
with ``initial condition'': $\psi_q^{q+1} = z_{q+1}$. We thus obtain the following generalization of Theorem~\ref{thm:mapZs}:
\begin{Proposition}\label{prop:mapCompRecur}
Consider the error system \eqref{eq:errDyn} and assume $(t_q,t_p]$-determinability of \eqref{eq:sysLin} for some $0\leq q < p$. Using the notation from Theorem~\ref{thm:mapZs} let
\[
    H_q^k := {U_q^k}^\top \begin{bmatrix}  0 \\ {\Theta_q^{k-1}}^\top \me^{A_{k-1}^\diff \tau_{k-1}} \Pi_{k-1} \end{bmatrix}
\]
and
\begin{equation}\label{eq:Xiqp-1}
    \Xi_q^{p-1}(\boldsymbol{\xi}_{q+1}^{p-1}) := \sum_{k=q+1}^{p-1} \left(\prod_{k+2}^{p} H_q^j\right) \xi_{k},
\end{equation}
then
\[
  \Pi_p e(t_p^-) = \Pi_p \left(\cO_q^p(\mathbf{z}_{q+1}^p) - \Xi_q^{p-1}(\boldsymbol{\xi}_{q+1}^{p-1})\right).
\]
\end{Proposition}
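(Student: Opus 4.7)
The plan is to adapt the proof of Theorem~\ref{thm:mapZs} by incorporating the reset terms $\xi_{k-1}$ that now appear in the error recursion~\eqref{eq:errDynb}. As before, I would introduce the orthonormal matrices $P_q^k, Q_q^k$ with column spaces $(\cQ_q^k)^\bot$ and $\cQ_q^k$, respectively, and decompose $e(t_k^-) = P_q^k \psi_q^k + Q_q^k \chi_q^k$. The crucial ingredient is the combined flow/reset identity
\[
   e(t_k^-) = \me^{A_{k-1}^\diff \tau_{k-1}} \Pi_{k-1}\bigl(e(t_{k-1}^-) - \xi_{k-1}\bigr),
\]
obtained by combining \eqref{eq:errFlowOneSwitch} with \eqref{eq:errDynb}, which carries the reset forward into the next switching interval.

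First I would compute $\psi_q^k = {P_q^k}^\top e(t_k^-)$ by exploiting the factorization $P_q^k = [Z_k, \Theta_q^{k-1}] U_q^k$. Multiplying the flow/reset identity on the left by $(U_q^k)^\top [Z_k, \Theta_q^{k-1}]^\top$ and invoking the two orthogonality properties $Z_k^\top e(t_k^-) = z_k$ (from the definition of $z_k$ in \eqref{eq:zpZp}) and ${\Theta_q^{k-1}}^\top \me^{A_{k-1}^\diff \tau_{k-1}} \Pi_{k-1} Q_q^{k-1} = 0$ (from the defining property \eqref{eq:Thetaqk} of $\Theta_q^{k-1}$), the $\chi_q^{k-1}$-component is annihilated and I obtain the linear recursion
\[
   \psi_q^k = F_q^k z_k + G_q^k \psi_q^{k-1} - H_q^k \xi_{k-1},
\]
initialized by $\psi_q^{q+1} = z_{q+1}$. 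This is the direct analogue of \eqref{eq:phiqk}, now with an additional inhomogeneous reset term governed by $H_q^k$.

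Next I would unroll this recursion from $k = q+1$ up to $k = p$. The $z$-contributions propagate forward through the $G$-products exactly as in the proof of Theorem~\ref{thm:mapZs}, so after pre-multiplication by $P_q^p$ they reassemble into $\cO_q^p(\mathbf{z}_{q+1}^p)$ as defined in \eqref{eq:Oqp}. Each reset $\xi_k$, for $q+1 \le k \le p-1$, first enters at step $k+1$ via $-H_q^{k+1}\xi_k$ and is then pushed forward by the trailing $G$-products, accumulating precisely into $-\Xi_q^{p-1}(\boldsymbol{\xi}_{q+1}^{p-1})$ as defined in \eqref{eq:Xiqp-1}. Finally, the $(t_q,t_p]$-determinability assumption together with Theorem~\ref{thm:DAEDetMult} gives $\cQ_q^p \subseteq \ker \Pi_p$, hence $\Pi_p Q_q^p = 0$ and $\Pi_p e(t_p^-) = \Pi_p P_q^p \psi_q^p$, which yields the claimed identity.

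The only substantive difficulty is bookkeeping: I must carefully verify the index shifts in the unrolled sums so that each reset contribution is multiplied by the correct product of $G_q^j$'s preceded by a single $H_q^{k+1}$, and I must be precise about the fact that the $\chi_q^{k-1}$-terms are wiped out by ${\Theta_q^{k-1}}^\top \me^{A_{k-1}^\diff \tau_{k-1}} \Pi_{k-1}$ at every step. No new conceptual ingredient beyond those used in Theorem~\ref{thm:mapZs} is required, since the reset enters linearly and the dynamics between switches remain governed by the same $\me^{A^\diff_{k-1}\tau_{k-1}}\Pi_{k-1}$ map.
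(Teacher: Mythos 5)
Your argument is essentially the paper's own: the authors prove this proposition by precisely the decomposition $e(t_k^-)=P_q^k\psi_q^k+Q_q^k\chi_q^k$, the flow/reset identity, and the recursion $\psi_q^k = F_q^k z_k + G_q^k\psi_q^{k-1} - H_q^k\xi_{k-1}$ with $\psi_q^{q+1}=z_{q+1}$ that you derive (including the annihilation of the $\chi$-component via \eqref{eq:Thetaqk} and the final use of $\cQ_q^p\subseteq\ker\Pi_p$), followed by the same unrolling. One remark: the coefficient of $\xi_k$ you obtain, namely $P_q^p\bigl(\prod_{j=k+2}^{p}G_q^j\bigr)H_q^{k+1}$, is the correct one, whereas the displayed formula \eqref{eq:Xiqp-1} literally shows a product of $H_q^j$'s without the $P_q^p$ prefactor, which is not even dimensionally consistent unless $\cQ_q^j=\{0\}$ for the intermediate indices --- an apparent typo in the paper that your derivation implicitly corrects, so do not claim your expression ``accumulates precisely into'' \eqref{eq:Xiqp-1} as printed.
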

%

\subsection{Observable Components and their Estimates}\label{sec:zEst}

Before presenting the main convergence result, the last ingredient for computing state resets $\xi_p$ for the estimator are the estimates of the observable components identified at each switching time.
In Section~\ref{sec:mapOne}, it was shown that these observable components $z_k$, $k\in\N$, comprise three subcomponents: $z_k^\cons$, $z_k^\diff$, and $z_k^\imp$.
Because of the algebraic constraints, we set $z_k^\cons = 0$, but for $z_k^\diff$ and $z_k^\imp$, we compute some appropriate estimates, denoted $\widehat{z}_k^\diff$ and $\widehat{z}_k^\imp$, respectively.
Using these estimates of the individual components, we let $\widehat{z}_k$ denoted the estimate of $z_k$, and define it as
\[
    \widehat{z}_k:= U_k^\top(z_k^\cons/\widehat{z}_k^\diff / \widehat{z}_k^\imp)
\]
with suitable compression matrix $U_k$ defined analogously as in \eqref{eq:zbardeff}. In the next two subsections, we explain how the estimates $\widehat{z}_k^\diff$ and $\widehat{z}_k^\imp$ must be calculated for the convergence result proved in Section~\ref{sec:obsConv}.


\subsubsection{Estimate the smooth part $z_k^\diff$}

Based on the discussion in Section~\ref{sec:mapOne}, it is possible to introduce a function $\mathbf{z}_{k-1}^\diff(\cdot) = {Z_{k-1}^\diff}^\top e(\cdot)$ on $(t_{k-1},t_k)$, $k \in \N$, and define an operator $\cO_{(t_{k-1},t_{k})}^\diff$ such that $\mathbf{z}_{k-1}^\diff = \cO_{(t_{k-1},t_{j})}^\diff(y^e_{(t_{pk1},t_k)})$ denotes the component of the state that can be recovered on $(t_{k-1},t_k)$ from the smooth part of the output.
We are interested in computing the estimate of the vector $z_{k}^\diff = \mathbf{z}_{k-1}^\diff(t_{k}^-)$.
The property that we require from the estimator is the following one:
\begin{eprop}
\item \label{prop:diff} For a given $\eps_k>0$, $k\in\N$, there is an estimator $\widehat{\cO}_k^\diff$ such that
\[
    \widehat{z}_k^\diff = \widehat{\cO}_k^\diff (y^e_{(t_{k-1},t_k)})
\]
has the property that
\[
\vert z_k^\diff - \widehat z_k^\diff \vert \le \eps_k |\mathbf z_{k-1}^\diff(t_{k-1}^+)|,
\]
where $|\cdot|$ denotes the Euclidian norm.
\end{eprop}
In the literature, one can find many estimation techniques for linear systems of the form \eqref{eq:zsys}.
One example of an estimator which satisfies this property is the classical \emph{Luenberger observer}.
Indeed, for $\mathbf z_{k-1}^\diff$ satisfying the equation \eqref{eq:zsys} (defined on $(t_{k-1},t_k)$), such an estimator is of the following form:
\[
\dot {\widehat {\mathbf z}}_{k-1}^\diff = (S_{k-1}^\diff - L_{k-1} R_{k-1}^\diff) \widehat{\mathbf{z}}_{k-1}^\diff + L_{k-1} y^e\ \text{on } (t_{k-1}, t_{k}),
\]
and we choose $\widehat {\mathbf z}_{k-1}^\diff (t_{k-1}^+) = 0$.
Because $(S_{k-1}^\diff, R_{k-1}^\diff)$ is an observable pair by construction, it follows from the {\em squashing lemma} \citep[Lemma~1]{PaitMors94} that for a given $\eps_k > 0$, and $\tau_{k-1} > 0$, there exists a matrix $L_{k-1}$ such that
$
   \|\me^{(S_{k-1}- L_{k-1} R_{k-1})\tau_{k-1}} \| \le \eps_{k},
$
where $\|\cdot\|$ denotes the induced matrix norm with respect to the Euclidian norm $|\cdot|$.
By setting $\widehat{z}_{k}^\diff = \widehat{\mathbf{z}}_{k-1}^\diff(t_{k}^-)$, and looking at the dynamics for $\mathbf{z}_{k-1}^\diff - \widehat{\mathbf{z}}_{k-1}^\diff$, it follows that the desired estimate $\vert z_{k}^\diff - \widehat{z}_{k}^\diff \vert \le \eps_{k} |\mathbf{z}_k^\diff(t_{k-1}^+)|$ holds.
By now, there exist many different estimation techniques for linear systems in the literature and the motivation for not fixing one particular estimation technique is to allow the possibility of several other estimators which have their own advantages.

\subsubsection{Estimate the impulsive part $z_k^\imp$}

To estimate $z_k^\imp$ from the impulsive part of the output, we write
the impulsive output of the error system \eqref{eq:impulse_error} as:
\[
   y^e[t_k] = \sum_{i=0}^{n-2} \eta_k^i \delta_{t_k}^{(i)}.
\]
Then, recalling \eqref{eq:z1imp},
\[
    z_k^\imp = U_k^{\imp^\top} \boldsymbol{\eta}_k,
\]
where $U_k^\imp$ is a compression matrix defined analogously as in \eqref{eq:U1imp} and $\boldsymbol{\eta}_k:=(\eta_k^0/\eta_k^1/\ldots/\eta_k^{n-2})$.
Recall that 
\[
   y^e[t_k] = C_k \widehat{x}_{k-1}[t_k] - y[t_k] =: C_k \sum_{i=0}^{n-2} \zeta_k^i \delta_{t_k}^{(i)} - \sum_{i=0}^{n-2} \nu_k^i \delta_{t_k}^{(i)},
\]
i.e.,
\[
   \eta_k^i = C_k \zeta_{k}^i - \nu_k^i,
\]
where $\nu_k^i$ must be obtained via measuring the impulsive part of the system's output and $\zeta_k^i$ results from running the system copy \eqref{eq:obsSyn}; in particular, $\zeta_k^i$ depends on the impulsive part of the input $u[t_k]$ as well as on the derivatives $u^{(i)}(t_k^+)$, $i=0,\ldots,n-2$, of the input immediately after the switch at $t_k$.

While the input may be known exactly (and hence $\zeta_k^i$ may be calculated analytically), obtaining $\nu_k^i$ from measurements may prove to be a very difficult task using physical sensors, because Dirac impulses do not occur in reality. One possibility to approximately determine $\nu_k^i$ is the following observation: Assume $\int$ denotes an ideal integrator, then
$
  \nu_k^0 = \left(\int y\right)(t_k+) - \left(\int y\right)(t_k^-),
$
and in general
\[
   \nu_k^i = \bigg(\underbrace{\int\int\cdots\int}_{i+1\text{ times}} y\bigg)(t_k^+) - \left(\int\int\cdots\int y\right)(t_k^-).
\]
If in reality, a Dirac impulse in $y[t_k]$ is ``smeared out'' on the interval $[t_k,t_k+\eps]$ then one gets an estimate of $\nu_k^0$ by
$
   \nu_k^0 \approx \left(\int y\right)(t_k+\eps) - \left(\int y\right)(t_k),
$
and analogously for $\nu_k^i$. The smaller $\eps$ is and the better the integrator is implemented, the closer the approximation is to the exact value $\nu_k^i$ (and also $\eta_k^i$). Hence, we may approximate $y^e[t_k]$ as 
$
   \widehat y^e[t_k] \approx \sum_{i=0}^{n-2} \widehat{\eta}_k^i \delta_{t_k}^{(i)}
$
and consequently
\[
    \widehat{z}^\imp_k :=  U_k^{\imp^\top} \widehat{\boldsymbol{\eta}}_k.
\]
\begin{eprop}
\item \label{prop:imp}For a given $\eps_k > 0$, $k\in\N$, one can obtain an approximation $\widehat{\boldsymbol{\eta}}_k$ of $\boldsymbol{\eta}_k$ such that
\[
    |\boldsymbol{\eta}_k - \widehat{\boldsymbol{\eta}}_k| \le \eps_k |\boldsymbol{\eta}_k|.
\]
\end{eprop}

\subsection{Convergence}\label{sec:obsConv}

We now show that the error correction vector $\xi_p$, $p \in \N$, computed from these estimates would make the state estimation error converge to zero, if the estimates of the observable components at each switching time are good enough, and a certain determinability assumption over intervals holds repeatedly.
To formalize this result, let us introduce the following assumption:
\begin{ass}
\item \label{ass:det}Assume that there exists a pair of subsequence in $\N$, non-decreasing and unbounded, denoted as $\{(q_i,p_i)\}_{i = 1}^\infty$ with $q_i<p_i<p_{i+1}$ and such that \eqref{eq:sysLin} is $(t_{q_i},t_{p_i}]$-determinable, i.e.\
\begin{equation}\label{eq:assDet}
   \cQ_{q_i}^{p_i} \subseteq \ker \Pi_{p_i} , \quad i = 1,2,3, \cdots.
\end{equation}

\end{ass}

Assumption~\ref{ass:det} basically requires that system \eqref{eq:sysLin} is persistently determinable, i.e., after any time instant, a determinable interval appears again eventually. At the end of these determinability intervals our observer resets the state estimate. Depending on the estimation accuracies formulated in \ref{prop:diff} and \ref{prop:imp} for individual components, the state resets make the overall estimation error sufficiently small. Afterwards the system copy runs without any continuous feedback, but its deviation from the original state is bounded by the systems dynamics. More formally, we can formulate the following qualitative convergence result:

\begin{Theorem}\label{thm:obsGenConv}
Consider the switched system \eqref{eq:sysLin} satisfying Assumption~\ref{ass:det}. For the impulsive observer \eqref{eq:obsSyn}, let
\[ 
   \xi_p = \begin{cases} \cO_{q_i}^{p_i}(\widehat{\mathbf{z}}_{q_i+1}^{p_i}) - \Xi_{q_i}^{p_i-1}(\boldsymbol{\xi}_{q_i+1}^{p_i-1}) & \text{ if } p = p_i,\ i\in\N,\\
0, & \text{ otherwise},
\end{cases}
\]
where the map $\cO_{q_i}^{p_i}$ is given by \eqref{eq:Oqp}, the estimates $\widehat{\mathbf{z}}_{q_i+1}^{p_i}=(\widehat{z}_{q_i+1}, \widehat{z}_{q_i+2},\ldots, \widehat{z}_{p_i})$ are computed as in Section~\ref{sec:zEst}, the map $\Xi_{q_i}^{p_i-1}$ is given by \eqref{eq:Xiqp-1} and $\boldsymbol{\xi}_{q_i+1}^{p_i-1}= (\xi_{q_i+1},\xi_{q_i+2},\ldots,\xi_{p_i-1})$.

For each $p \in \N$, there exists $\eps_p > 0$ such that, if $\widehat z_p$ satisfies the estimation properties \ref{prop:diff} and \ref{prop:imp} for the given $\eps_p$, then it holds that
\[
   \boxed{ \lim_{t\rightarrow \infty} |\widehat{x}(t^+) - x(t^+)| = 0.}
\]
\end{Theorem}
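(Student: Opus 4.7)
The plan is to combine the exact identity of Proposition~\ref{prop:mapCompRecur} with the componentwise estimation-error bounds \ref{prop:diff}--\ref{prop:imp}, and then exploit the bounded linear-flow behavior of the homogeneous error system~\eqref{eq:errDyn} between consecutive resets. Starting from the reset rule $e(t_{p_i}^+)=\Pi_{p_i}(e(t_{p_i}^-)-\xi_{p_i})$, Proposition~\ref{prop:mapCompRecur} with $(q,p)=(q_i,p_i)$ gives
$\Pi_{p_i}e(t_{p_i}^-)=\Pi_{p_i}\bigl(\cO_{q_i}^{p_i}(\mathbf{z}_{q_i+1}^{p_i})-\Xi_{q_i}^{p_i-1}(\boldsymbol{\xi}_{q_i+1}^{p_i-1})\bigr)$,
while the definition of $\xi_{p_i}$ in the theorem is the same expression with $\mathbf{z}_{q_i+1}^{p_i}$ replaced by $\widehat{\mathbf{z}}_{q_i+1}^{p_i}$. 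Linearity of both $\cO_{q_i}^{p_i}$ and $\Xi_{q_i}^{p_i-1}$ cancels the $\Xi$-terms, yielding the key identity
\[
   e(t_{p_i}^+)\;=\;\Pi_{p_i}\,\cO_{q_i}^{p_i}\bigl(\mathbf{z}_{q_i+1}^{p_i}-\widehat{\mathbf{z}}_{q_i+1}^{p_i}\bigr),
\]
so that after each reset the error is driven purely by the estimation error of the locally observable components.

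Next I would bound each coordinate $z_k-\widehat{z}_k$: the consistency part vanishes by construction, \ref{prop:diff} bounds the differential part by $\eps_k|\mathbf{z}_{k-1}^\diff(t_{k-1}^+)|$, and \ref{prop:imp} bounds the impulsive part by $\eps_k|\boldsymbol{\eta}_k|$. Since $\mathbf{z}_{k-1}^\diff(t_{k-1}^+)$ and $\boldsymbol{\eta}_k$ are fixed linear images of $e(t_{k-1}^+)$ and $e(t_k^-)$, and since no correction is applied strictly inside the determinability window $(t_{q_i},t_{p_i}]$, Lemma~\ref{lem:diffProj} propagates the error linearly from $t_{q_i}^+$, i.e.\ $e(t_k^-)=\me^{A_{k-1}^\diff \tau_{k-1}}\Pi_{k-1}\cdots \me^{A_{q_i}^\diff \tau_{q_i}}\Pi_{q_i}\,e(t_{q_i}^+)$, giving $|e(t_k^-)|\le N_i|e(t_{q_i}^+)|$ with $N_i$ depending only on the system data and the lengths $\tau_{q_i},\ldots,\tau_{p_i-1}$. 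Collecting constants one obtains $|e(t_{p_i}^+)|\le \gamma_i|e(t_{q_i}^+)|$ where $\gamma_i$ is linear in $\eps_{q_i+1},\ldots,\eps_{p_i}$. On the ``gap'' interval $(t_{p_{i-1}},t_{q_i}]$ no correction is applied, so the same flow argument gives $|e(t_{q_i}^+)|\le R_i|e(t_{p_{i-1}}^+)|$ for a data-dependent $R_i$ (if the windows overlap, i.e.\ $q_{i+1}\le p_i$, one iterates directly between consecutive reset times). Choosing the $\eps_p$ small enough to enforce $\gamma_i R_i\le 1/2$ for every $i$ makes $|e(t_{p_i}^+)|\to 0$, and the uniform free-flow bound $|e(t)|\le K_i|e(t_{p_i}^+)|$ on $(t_{p_i},t_{p_{i+1}})$ extends the conclusion to all $t$, giving $|\widehat{x}(t^+)-x(t^+)|\to 0$ as claimed.

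The main technical obstacle is the bookkeeping required to guarantee that the constants $N_i,R_i,K_i$ and $\|\Pi_{p_i}\cO_{q_i}^{p_i}\|$ admit \emph{a priori} bounds expressible solely in terms of the switching signal and subsystem data, so that a feasible choice of $\eps_p$ can be certified \emph{before} the observer is run, independently of the unknown initial state. A secondary subtlety is the case of overlapping windows $q_{i+1}\le p_i$: here the ``gap'' argument degenerates, and one must apply the contraction directly from $t_{p_{i-1}}^+$ to $t_{p_i}^+$; the algebraic identity derived in the first paragraph is unaffected and the contraction scheme persists, so the convergence argument still goes through without essential change.
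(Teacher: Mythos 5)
Your opening is exactly the paper's: the cancellation of the $\Xi$-terms via Proposition~\ref{prop:mapCompRecur} to get $e(t_{p_i}^+)=\Pi_{p_i}\cO_{q_i}^{p_i}(\mathbf{z}_{q_i+1}^{p_i}-\widehat{\mathbf{z}}_{q_i+1}^{p_i})$, followed by the componentwise bounds from \ref{prop:diff} and \ref{prop:imp} (the paper packages these as Lemma~\ref{lem:zk-hatzk_bound}, giving $|z_k-\widehat z_k|\le\eps_k M^z_{k-1,k}|e(t_{k-1}^+)|$), and then a contraction over successive reset times. For the interval-wise case ($q_{i+1}=p_i$) your argument is complete and matches the paper.

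The genuine gap is in your treatment of overlapping windows, which you dismiss as a ``secondary subtlety'' that goes through ``without essential change.'' Two things break. First, your propagation identity $e(t_k^-)=\me^{A_{k-1}^\diff\tau_{k-1}}\Pi_{k-1}\cdots\me^{A_{q_i}^\diff\tau_{q_i}}\Pi_{q_i}e(t_{q_i}^+)$ is false when a reset time $p_j$ with $q_i<p_j<k$ lies inside the window: the flow is interrupted by $\xi_{p_j}\neq 0$ there (this is precisely why the map $\Xi_q^{p-1}$ exists at all). One must instead propagate each $e(t_{k-1}^+)$ back only to the \emph{most recent} reset time $t_{p_{h(i,k-1)}}$ preceding it. Second, and more importantly, doing so turns the bound into
\begin{equation*}
  |e(t_{p_i}^+)|\ \le\ \frac{c_i}{|J_i|}\sum_{j=1}^{|J_i|}|e(t_{p_{i-j}}^+)|,
\end{equation*}
i.e.\ an average over \emph{several} past reset errors with $c_i<1$, not a one-step contraction $|a_i|\le\tfrac12|a_{i-1}|$. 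Such a recursion does not obviously force $|a_i|\to 0$ (the older terms in the average need not be small), and the paper devotes a separate result, Lemma~\ref{lem:convSeq}, to proving that it does — crucially using that $(q_i)$ is nondecreasing and unbounded (Assumption~\ref{ass:det}) so that the ``memory depth'' $i-|J_i|$ eventually escapes to infinity. The same unboundedness is also what guarantees that each $\eps_k$ enters only finitely many of the constants $c_i$, so the smallness requirements on the $\eps_k$ are simultaneously satisfiable; your proposal imposes $\gamma_iR_i\le 1/2$ ``for every $i$'' without checking this consistency. Your argument as written therefore only covers the interval-wise special case; to close it you need the averaging lemma (or an equivalent argument) for the sliding-window and general cases.
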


The proof of Theorem~\ref{thm:obsGenConv} is constructive from design viewpoint and a quantitative bound on the $\eps_p$ is computed.
Before proving this result in its generality, we highlight two special cases, whose convergence proofs form the basis for the general proof of Theorem~\ref{thm:obsGenConv}.
\begin{Definition}[Interval-wise and sliding window observer]
   Consider our general observer design as given in Theorem~\ref{thm:obsGenConv}. We call this observer \emph{interval-wise observer} if $q_{i+1}=p_i$ for all $i\in\N$, i.e.\ the determinability intervals cover the whole time axes without overlap. On the other hand, when there is the maximal possible overlap, i.e.\ $p_{i+1}=p_i+1$ for all $i\in\N$, then we call our observer \emph{sliding window observer}.
\end{Definition}

\begin{proof}[Proof of Theorem~\ref{thm:obsGenConv}]
We first observe in general that, due to Proposition~\ref{prop:mapCompRecur} and \eqref{eq:preErrC} for $p=p_i$,
\[\begin{aligned}
   e(t_{p_i}^+) &= \Pi_{p_i} ( e(t_{p_i}^-)-\xi_{p_i})\\
    &= \Pi_{p_i}\left(\cO_{q_i}^{p_i}(\mathbf{z}_{q_i+1}^{p_i}) - \Xi_{q_i}^{p_i-1}(\boldsymbol{\xi}_{q_i+1}^{p_i-1})\right.\\
    &\phantom{{}= \Pi_{p_i}~}\left.-\left(\cO_{q_i}^{p_i}(\widehat{\mathbf{z}}_{q_i+1}^{p_i}) - \Xi_{q_i}^{p_i-1}(\boldsymbol{\xi}_{q_i+1}^{p_i-1}\right)\right)\\
    &= \Pi_{p_i}\cO_{q_i}^{p_i}(\mathbf{z}_{q_i+1}^{p_i}-\widehat{\mathbf{z}}_{q_i+1}^{p_i}).
    \end{aligned}
\]
From Lemma~\ref{lem:zk-hatzk_bound} in \ref{app:lemmas}, for each $k \in \N$, there exists a constant $M^z_{k-1,k}>0$ depending on $\tau_{k-1}=t_k-t_{k-1}$, $(E_{k-1},A_{k-1},C_{k-1})$ and $(E_k,A_k,C_k)$ such that
   \begin{equation}\label{eq:boundzk}
      |z_k - \widehat{z}_k| \leq \eps_k M^z_{k-1,k} |e(t_{k-1}^+)|.
   \end{equation}
Using the notation of Theorem~\ref{thm:mapZs}, \eqref{eq:Oqp} and \eqref{eq:boundzk} yield
\begin{equation}\label{eq:recursive_error_bound}
   |e(t_{p_i}^+)| \leq \sum_{k=q_i+1}^{p_i}  \eps_k M^{\cO}_{k,p_i}  M^z_{k-1,k} |e(t_{k-1}^+)|,
\end{equation}
where
\[
  M^{\cO}_{k,p_i} := \left\| \Pi_{p_i}P_{q_i}^{p_i} \left(\prod_{j=k+1}^{p_i} G_{q_i}^j\right) F_{q_i}^k\right\|.
\]

\emph{Interval-wise observer.} For this case, we first utilize the fact that $\xi_k=0$ for $k=q_i+1,q_i+2,\ldots,p_i-1$, and by invoking Lemma~\ref{lem:diffProj}, we have for these $k$:
\[
   e(t_k^+) = \left(\prod_{j=q_i}^{k-1} \Pi_{j+1}\me^{A^\diff_j \tau_j} \right) e(t_{q_i}^+).
\]
Substitution in \eqref{eq:recursive_error_bound} gives
\[
   |e(t_{p_i}^+)| \leq c_i |e(t_{q_i}^+)| = c_i |e(t_{p_{i-1}}^+)|.
\]
The constant $c_i$ is defined as:
\[
    c_i := \sum_{k=q_i+1}^{p_i}  \eps_k M^{\cO}_{k,p_i}  M^z_{k-1,k} M^\diff_{q_i,k-2}
\]
where, for $k = q_i+2, \dots, p_i$, we let
\begin{equation}\label{eq:M^Adiff}
    M^\diff_{q_i,k-2} := \left\| \prod_{j=q_i}^{k-2} \Pi_{j+1}\me^{A^\diff_j \tau_j}\right\|
\end{equation}
and by convention, $M^\diff_{q_i,q_i-1} = 1$.
On each determinability interval $(p_{i-1},p_i]$ we therefore can chose $\eps_{p_{i-1}+1}, \eps_{p_{i-1}+2}, \ldots \eps_{p_i}$ sufficiently small such that $c_i\in(0,1)$ and $c_i$ is uniformly bounded away from $1$. We can thus conclude that in this case
\begin{equation}\label{eq:etpi-convergence}
   e(t_{p_i}^+) \underset{i\to\infty}{\longrightarrow} 0.
\end{equation}
Note that the ability to let $i$ tend to infinity (and also $t_{p_i}\to\infty$) follows from assumption \ref{ass:det}.

Next, we have for $t\in(t_k,t_{k+1})\subseteq (t_{p_{i}},t_{p_{i+1}})$:
\[Ich
   e(t^+) = \me^{A^\diff_k (t-t_k)} \left(\prod_{j=p_{i}}^{k-1} \Pi_{j+1}\me^{A^\diff_j \tau_j} \right) e(t_{p_{i}}^+),
\]
which gives
$
   |e(t^+)| \leq \overline{M}^\diff_{p_{i},p_{i+1}} |e(t_{p_{i}}^+)|,
$
where
\[
    \overline{M}^\diff_{p_{i},p_{i+1}} :=\sup_{t\in (t_{p_{i}},t_{p_{i+1}})} \left\|\me^{A^\diff_{k(t)} (t-t_{k(t)})} \left(\prod_{j=p_{i}}^{k(t)-1} \Pi_{j+1}\me^{A^\diff_j \tau_j} \right)\right\|,
\]
and $k(t)\in\{p_{i},p_{i}+1,\ldots, p_{i+1}-1\}$ is such that $t\in(t_{k(t)},t_{k(t)+1})$. We may now chose $c_{i}$ small enough (by choosing $\eps_{p_{i-1}+1}$, \ldots, $\eps_{p_{i}}$ to be sufficiently small) such that $c_{i}\overline{M}^\diff_{p_{i},p_{i+1}}$ is uniformly bounded, say by $\overline c >0$, then for all $t\in(t_{p_{i}},t_{p_{i+1}})$ we have
\[
   |e(t^+)|\leq \overline{M}^\diff_{p_{i},p_{i+1}} c_{i} |e(t_{p_{i-1}}^+)| \leq \overline c \, |e(t_{p_{i-1}}^+)|,
\]
and the convergence of $|e(t^+)|$ towards zero as $t\to\infty$ now follows from \eqref{eq:etpi-convergence}.

\emph{Sliding window observer.} By Assumption~\ref{ass:det}, the sequence $(q_i)_{i\in\N}$ is nondecreasing and unbounded, i.e., for sufficiently large $i$ we have $q_i\geq p_1$. By using the relation $p_{i+1}=p_i+1$, we can rewrite \eqref{eq:recursive_error_bound} as
\[
    |e(t_{p_i}^+)| \leq \sum_{k=q_i+1}^{p_i}  \eps_k M^{\cO}_{k,p_i}  M^z_{k-1,k} |e(t_{p_{i-p_i+k-1}}^+)|
\]
for sufficiently large $i$. Let
\[
   c_i := (p_i-q_i) \max_{k=q_i+1,\ldots,p_i}\eps_k M^\cO_{k,p_i} M^z_{k-1,k}.
\]
Since $(q_i)_{i\in\M}$ is non-decreasing and unbounded, for any fixed $k$, there are only finitely many indices $i$ such that $c_i$ depends on $\eps_k$. Hence for sufficiently small $\eps_k$ we have $c_i\in(0,1)$ uniformly bounded away from $1$ and
\[
    |e(t_{p_i}^+)| \leq \frac{c_i}{p_i-q_i} \sum_{j=1}^{p_i-q_i} |e(t_{p_{i-j}}^+)|.
\]
Applying Lemma~\ref{lem:convSeq} from the \ref{app:lemmas} results in
\begin{equation}\label{eq:slide_err_ti}
   |e(t_{p_i}^+)|\underset{i\to\infty}{\longrightarrow} 0.
\end{equation}
Next, for any $t\in (t_{p_i},t_{p_{i}+1})$, we have $e(t^+) = \me^{A^\diff_{p_i}(t-t_{p_i})} e(t_{p_i}^+)$ and hence
\[
   |e(t^+)| \leq M^\diff_{p_i} |e(t_{p_i}^+)| \leq  M^\diff_{p_i} c_i \frac{1}{p_i-q_i}\sum_{j=1}^{p_i-q_i} |e(t_{p_{i-j}}^+)|,
\]
where
$
   M^\diff_{p_i} = \sup_{s\in(0,\tau_{p_i})} \left\|\me^{A^\diff_{p_i}s}\right\|.
$
For sufficiently small $\eps_k$, we can ensure that $M^\diff_{p_i} c_i$ is uniformly bounded, say by $\overline c >0$. Furthermore, for any $\eps>0$ there exists an index $i_{\eps}$ such that $|e(t_p^+)|\leq \eps$ for all $p\geq q_{i_\eps}$, hence for all $t\in(t_{p_i},t_{p_i+1})$ and all $i\geq i_\eps$ we have
\[
   |e(t^+)| \leq \overline{c} \, \frac{1}{p_i-q_i}\sum_{j=1}^{p_i-q_i} |e(t_{p_{i-j}}^+)| \leq \overline{c} \eps,
\]
This shows convergence of $e(t^+)$ towards $0$ for $t\to\infty$.

\emph{The general case.} We now combine the proof ideas from the interval-wise and sliding window observer to also prove the general case.
To this end, for a fixed $i \in \N$, introduce the function $h(i,\cdot): \{q_i, \dots, p_i-1\} \rightarrow \N$ such that\footnote{
For the interval-wise observer, $h(i,k)=p_{i-1} = q_i$. For sliding-window observer, where $p_{i-1} = p_i - 1$, we had $h(i,k)= k = p_i-p_i + k = p_{i-p_i+k}$.}
\[
h(i,k) = \max \{j \, \vert \, p_j \le k\}
\]
and let the set $J_i$ be the range of $h(i,\cdot)$, that is, $J_i := \{h(i,k), k=q_i, \dots, p_i-1\}$.
For each $k = q_i, \dots, p_i - 1$, it holds that, recalling \eqref{eq:M^Adiff},
\[
\vert e(t_{k}^+) \vert \le M_{p_{h(i,k)},k-1}^\diff |e(t_{p_{h(i,k)}}^+)|
\]
and \eqref{eq:recursive_error_bound} becomes
\[
\vert e(t_{p_i}^+) \vert \le \!\!\!\sum_{k=q_i+1}^{p_i} \!\!\!\eps_{k} M_{k,p_i}^{\cO} M_{k-1,k}^z M_{p_{h(i,k-1)},k-2}^\diff \, \vert e(t_{p_{h(i,k-1)}}) \vert .
\]
Let $\vert J_i \vert$ denote the cardinality of $J_i$, and let
\[
c_i := \left \vert J_i \right \vert \max_{k=q_i+1,\ldots,p_i}\eps_k M^\cO_{k,p_i} M^z_{k-1,k} M_{p_{h(i,k-1)},k-2},
\]
then by choosing $\eps_k$ sufficiently small, we again have $c_i \in (0,1)$ uniformly bounded away from 1, and
\[
    |e(t_{p_i}^+)| \leq \frac{c_i}{\vert J_i\vert} \sum_{j=1}^{\vert J_i \vert} |e(t_{p_{i-j}}^+)|.
\]
Once again, it follows from Lemma~\ref{lem:convSeq} in \ref{app:lemmas} that $|e(t_{p_i}^+)| \to 0$ as $i \rightarrow \infty$. To show that, $|e(t^+)|$ converges to zero for $t \in (t_{p_i},t_{p_{i+1}})$, one can follow exactly the same arguments as in the case of interval-wise observer to concluded that $e(t^+)\to 0$ as $t\to\infty$.
\end{proof}

\begin{Remark}[Convergence of impulsive part {$\widehat x[t_p] - x[t_p]$}]
It was already observed that
  $
     e[t_p] = -\sum_{i=0}^{n-2} (E_p^\imp)^{i+1} (e(t_p^-) - \xi_p) \delta_{t_p}^{(i)}.
  $
From Theorem~\ref{thm:obsGenConv}, we have $e(t_p^-)$ converging to zero for large $p \in \N$. Since $\xi_p$ is by construction an estimate of $e(t_p^-)$ which gets closer and closer to the real value of $e(t_p^-)$ as $p$ gets large, it follows that the coefficients multiplying Dirac impulses get smaller with time.
\end{Remark}


\section{Simulations}\label{sec:sim}
For the simulation of a sliding window observer, we refer to our conference paper \citep{TanwTren13}. The simulations for the interval-wise observer are now presented for the example considered in Section~\ref{sec:example}. As a known input to the system, we chose
$
   u(t) = 1 + \sin(t).
$
The corresponding output of the switched system \eqref{eq:sysLin} with initial condition $x(t_0) = [1,\ 3/2,\ 2,\ 5/2]^\top$ is shown in Figure~\ref{fig:output}.
\begin{figure}[htb]
\centering
  \includegraphics[width=0.375\textwidth]{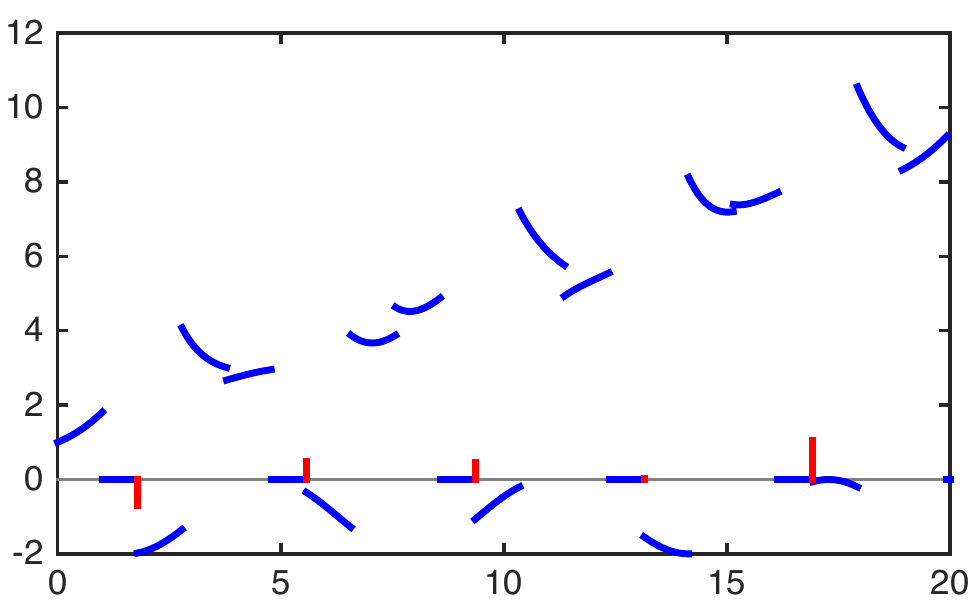}
  \caption{Output $y=y^f_\D + y[\cdot]$ of \eqref{eq:sysLin} with regular part $y^f$ (blue) and indication of Dirac impulses in $y[\cdot]$ (red).}\label{fig:output}
\end{figure}

For our observer design, we estimate $z^{\diff}_p$ using a classical Luenberger observer with 
$
   L_0 = [1/4,3/8]^\top,\quad L_3 = [3/8, 1/4]^\top,
$
and we add ten percent measurement noise in the impulsive part of $y[\cdot]$. The system copy starts with zero initial value. The resulting estimation errors (without the Dirac impulses) are shown in Figure~\ref{fig:errors}.
\begin{figure}[hbt]
  \includegraphics[width=0.475\textwidth]{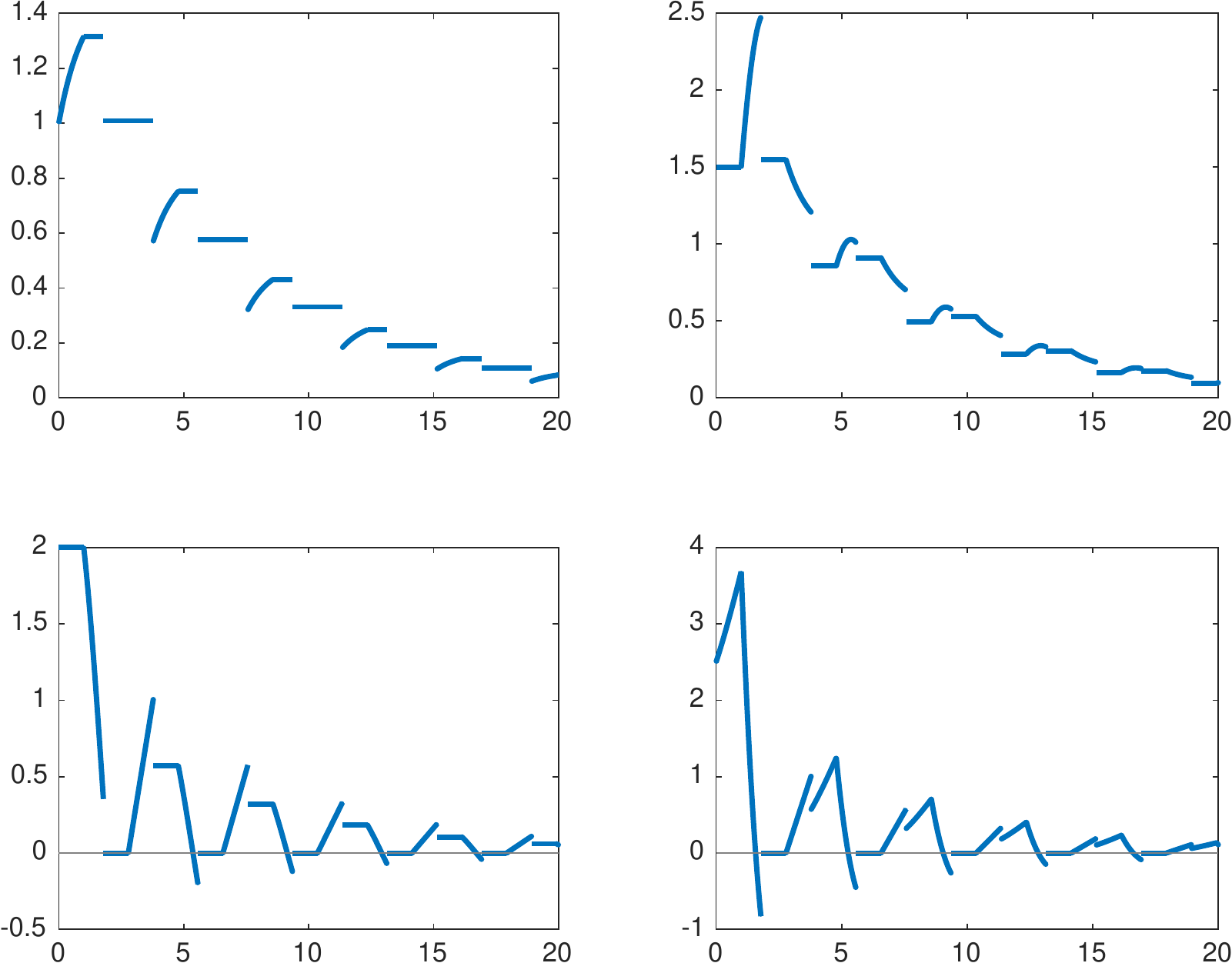}
  \caption{Estimation errors (without Dirac impulses), $x_1-\widehat{x}_1$ upper left figure, $x_2-\widehat{x}_2$ upper right figure, $x_3-\widehat{x}_3$ lower left figure, $x_4-\widehat{x}_4$ lower right figure.}\label{fig:errors}
\end{figure}
Clearly, the estimation error converges (slowly) to zero. Note that the convergence can be accelerated significantly by using a more aggressive gain matrix in Luenberger observers. However, there exists a lower bound on the gain matrices which is determined by the length of the determinability intervals.

\section{Conclusions}\label{sec:conc}
The paper considered the problem of state estimation in switched linear DAEs.
The notion of determinability studied in this paper relates to the reconstruction of the state value at some time by processing outputs and inputs over an interval.
This does not necessarily require observability of the initial state, or the individual subsystems.
The geometric characterization of determinability is then used for synthesis of a class of state estimators.
In contrast to classical estimation techniques which require continuous output injection, in our approach the estimator is reset at some discrete time instants after processing the external measurements over an interval.

For future work, we are interested in developing state estimators which only require the property of detectability from system.
Our preliminary results on detectability of switched DAEs have appeared in \citep{TanwTren15}.
Similar concepts have been used for studying the notion of controllability in switched DAEs \citep{KustRupp15} and a duality result is also available \citep{KustTren15pp}. It would be interesting to investigate if such ideas can be used for designing stabilizing controllers.

\appendix
\section{Detailed Proofs}\label{app:proofs}

\begin{proof}[Proof of Proposition~\ref{prop:fwdObsSing}]
The above discussion already shows ``$\supseteq$''. To show the converse subspace inclusion, let $e_1 \in \cW_1$. We need to show that there exists a solution $e$ of \eqref{eq:sysHomLin} such that $e(t_1^-) = e_1$ and the resulting output is such that $y^e_{(t_0,t_1]} \equiv 0$. For that let $e_0:= \me^{-A_0^\diff (t_1-t_0)}e_1$. Since $e_1\in\fC_0$ and $\fC_0$ is $A^\diff_0$-invariant, it follows that $e_0\in\fC_0$. Lemma~\ref{lem:diffProj} now yields that \eqref{eq:sysHomLin} with initial condition $e(t_0^+) = e_0$ has a solution given by $e(t)=e^{A_0^\diff (t-t_0)}e_0$ on $(t_0,t_1)$, and in particular, $e(t_1^-) = e_1$. Furthermore, $e_1\in\ker O^\diff_0$ and $A_0^\diff$-invariance of $\ker O^\diff_0$ implies that $e(t)\in\ker O^\diff_0$ for $t \in (t_0,t_1)$, i.e., $y^e_{(t_0,t_1)}=0$. Finally, $e(t_1^-)=e_1\in\ker O_0^\imp$, hence $y^e[t_1]=0$ by Lemma~\ref{lem:impulses}.
\end{proof}

\begin{proof}[Proof of Theorem~\ref{thm:DAEDetMult}]
We first prove \eqref{eq:impFwdObsMult}.

($\supseteq$).
To prove this inclusion, we consider the solution $e$ that satisfies \eqref{eq:sysHomLin} on $(t_q,t_p]$ and the corresponding output $y^e$ is such that $y^e_{(t_q,t_p]}=0$. From Proposition~\ref{prop:fwdObsSing}, we have $e(t_{q+1}^-) \in \cW_{q+1} = \cQ_q^{q+1}$.
Using an induction based argument, it is shown that $e(t_k^-)\in\cQ_q^k$ for all $k = q+1, q+2, \ldots, p$.
Assume that the desired relation holds for some $k \ge q+1$. As $e(t_{k+1}^-)= \me^{A_k^\diff \tau_k} \Pi_k e(t_k^-)$, it follows that $e(t_{k+1}^-)\in \me^{A_k^\diff \tau_k} \Pi_k \cQ_q^k$. Also, the application of Proposition~\ref{prop:fwdObsSing} on the interval $(t_k,t_{k+1}]$ yields that $e(t_{k+1}^-)\in \cW_{k+1}$. Hence we have shown that $e(t_{k+1}^-) \in \cW_{k+1} \cap \me^{A_k^\diff \tau_k} \Pi_k \cQ_q^k = \cQ_q^{k+1}$.

($\subseteq$).
For each $e_p \in \cQ_q^p$, we want to construct a solution $e$ of \eqref{eq:sysHomLin} over the interval $(t_q,t_p]$ such that $e(t_p^-) = e_p$ and $e(t_k^-) \in \cW_k$, for $k=q+1,q+2,\ldots,p$. From Proposition~\ref{prop:fwdObsSing}, it would then follow that $y^e_{(t_{k-1},t_k]} = 0$ and the claim is shown.
To construct the desired solution, let $e_p\in \cQ_q^p$ and chose $e_k \in \cQ_q^k$ for $k=p-1, p-2, \ldots, q+1$ so that
\[
 \me^{A_k^\diff \tau_k} \Pi_k e_k = e_{k+1},
\]
which is possible due to the definition of $\cQ_q^{k+1}$. Finally let $e_q^+:= \me^{-A^\diff_q \tau_q} e_{q+1}$. By construction, $e_{q+1}\in\fC_{q}$ and hence $e_q^+\in\fC_q$ as well. Hence there exists a local\footnote{
Note that there may not exist a global solution with the initial condition $e(t_q^+)=e_q^+$, as already shown by example in Remark~\ref{rem:local_sol}.}
solution $e$ of \eqref{eq:sysHomLin} on $(t_q,t_p]$ with $e(t_q^+)=e_q^+$. Furthermore, Lemmas~\ref{lem:consProj} and \ref{lem:diffProj} yield $e(t_{k+1}^-) = \me^{A_k^\diff \tau_k} \Pi_k e(t_k^-)$ for all $k=q+2,q+3,\ldots,p-1$ as well as $e(t_{q+1}^-)=\me^{A_q^\diff\tau_q} e(t_q^+) = e_{q+1}$. Inductively, we can now conclude that $e(t_k^-) = e_k \in \cQ_q^k \subseteq \cW_k$ for all $k=q+1,q+2,\ldots p$ which concludes this proof step.

Finally,  due to Lemma~\ref{lem:consProj}, the condition \eqref{eq:CharacDetDAE} is a characterization for determinability of \eqref{eq:sysHomLin} and Proposition~\ref{prop:fwdObsZero} yields the same determinability characterization for the inhomogeneous switched DAE \eqref{eq:sysLin}.
\end{proof}

\section{Lemmas Used in Derivations}\label{app:lemmas}
\begin{Lemma}\label{lem:reduced_obs}
   Consider the ODE
   \[
       \dot{x} = Ax,\quad y = Cx
   \]
   for some $A\in\R^{n\times n}$ and $C\in\R^{\dy\times n}$, and chose a matrix $Z$ with orthonormal columns such that
   \[
      \im Z = \im [C / CA / CA^2 / \ldots/ C A^{n-1}]^\top.
   \]
   Then for any solution $x$ of $\dot{x}=Ax$ we have that $z = Z^\top x$ is a solution of the observable ODE
   \[
      \dot{z} = Z^\top A Z z,\quad y = C Z z.
   \]
\end{Lemma}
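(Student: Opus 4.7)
The plan is to identify $(\im Z)^\perp = \ker Z^\top$ as the classical unobservable subspace $\cN := \ker O$ of the pair $(A,C)$, where $O=[C/CA/\cdots/CA^{n-1}]$. Indeed, since $\im Z$ is chosen to equal $\im O^\top$, we have $\ker Z^\top = (\im O^\top)^\perp = \ker O = \cN$. Two properties of $\cN$ are standard and will be used throughout: (a) $\cN$ is $A$-invariant (equivalently, $\im O^\top$ is $A^\top$-invariant, which follows from Cayley--Hamilton), and (b) $\cN \subseteq \ker C$.

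Next, I would derive the reduced dynamics. Let $P := I - ZZ^\top$, the orthogonal projector onto $\cN$. For any solution $x$ of $\dot x = Ax$ set $z := Z^\top x$. Then
\[
  \dot z = Z^\top A x = Z^\top A (ZZ^\top + P) x = Z^\top A Z\, z + Z^\top (APx).
\]
Since $Px \in \cN$ and $A\cN \subseteq \cN = \ker Z^\top$ by (a), the second summand vanishes, giving $\dot z = Z^\top A Z z$. For the output, $y = Cx = C(ZZ^\top + P)x = CZz + CPx$, and $CPx = 0$ by (b), so $y = CZz$.

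It remains to show that the reduced pair $(Z^\top A Z, CZ)$ is observable. Here the key calculation is the identity
\[
  CZ (Z^\top A Z)^k = C A^k Z \qquad \text{for all } k \ge 0,
\]
proved by induction: the step reduces to showing $CA^k P A Z = 0$, which follows because $PAZ$ lands in $\cN$, $A$-invariance of $\cN$ keeps $A^k P A Z$ in $\cN$, and $\cN\subseteq\ker C$. Consequently, the reduced observability matrix equals $OZ$. Its kernel consists of $v$ with $Zv \in \ker O = (\im Z)^\perp$; combined with $Zv\in\im Z$ this forces $Zv = 0$, and injectivity of $Z$ gives $v=0$. Hence the reduced system is observable.

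The one point that requires care (and which I expect to be the main subtlety) is the interplay between $A$- and $A^\top$-invariance: $\im Z$ itself need not be $A$-invariant, so one cannot simply ``restrict'' $A$ to $\im Z$. The argument goes through because what is actually needed is $A$-invariance of the \emph{orthogonal complement} $\cN$, which in turn is equivalent to $A^\top$-invariance of $\im Z$ -- a property that does hold by construction of $Z$ via the observability matrix.
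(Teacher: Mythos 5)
Your proof is correct and amounts to a self-contained derivation of exactly what the paper's one-line proof invokes, namely the Kalman observability decomposition (the paper simply cites it without details). The only cosmetic imprecision is calling $OZ$ ``the reduced observability matrix'' despite the block-row count mismatch ($n$ versus $\dim\im Z$), but since the kernels coincide by Cayley--Hamilton this does not affect the argument.
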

\begin{proof}
   This is a simple consequence from the well known Kalman observability decomposition.
\end{proof}
\begin{Lemma}\label{lem:zk-hatzk_bound}
   Consider the switched DAE \eqref{eq:sysLin} together with the impulsive observer \eqref{eq:obsSyn} and corresponding error dynamics \eqref{eq:errDyn}. For $k\in\N$ let $z_k$ and $\widehat{z}_k$ be given as in Section~\ref{sec:mapOne} and Section~\ref{sec:zEst}. If for some $\eps_k > 0$, the estimation properties \ref{prop:diff} and \ref{prop:imp} hold, then there exists a constant $M^z_{k-1,k}>0$ depending on $\tau_{k-1}=t_k-t_{k-1}$, $(E_{k-1},A_{k-1},C_{k-1})$ and $(E_k,A_k,C_k)$ such that
   \begin{equation}\label{eq:boundzk-2}
      |z_k - \widehat{z}_k| \leq \eps_k M^z_{k-1,k} |e(t_{k-1}^+)|.
   \end{equation}
\end{Lemma}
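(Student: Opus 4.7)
The plan is to unpack the definition of $\widehat{z}_k$ from Section~\ref{sec:zEst} and of $z_k$ from Section~\ref{sec:mapOne}, compare them component-wise along the decomposition into consistency, differentiable and impulsive parts, and control each difference separately using the assumed estimation properties \ref{prop:diff} and \ref{prop:imp}. Recall that
\[
   z_k = U_k^\top (z_k^\cons / z_k^\diff / z_k^\imp),\qquad \widehat{z}_k = U_k^\top (0 / \widehat{z}_k^\diff / \widehat{z}_k^\imp),
\]
so the first observation is that the consistency components cancel: indeed, any solution of the homogeneous error dynamics \eqref{eq:errDyna} on $(t_{k-1},t_k)$ evolves within $\fC_{k-1}$ by Lemmas~\ref{lem:consProj} and \ref{lem:diffProj}, hence $z_k^\cons = {Z_{k-1}^\cons}^\top e(t_k^-) = 0$. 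This reduces the task to bounding $|z_k^\diff - \widehat{z}_k^\diff|$ and $|z_k^\imp - \widehat{z}_k^\imp|$ separately, and then paying the norm price of the compression matrix $U_k$ when recombining.

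For the differentiable part, property \ref{prop:diff} directly yields
\[
   |z_k^\diff - \widehat{z}_k^\diff| \le \eps_k\,|\mathbf{z}_{k-1}^\diff(t_{k-1}^+)|
   = \eps_k\,|{Z_{k-1}^\diff}^\top e(t_{k-1}^+)| \le \eps_k\,|e(t_{k-1}^+)|,
\]
since $Z_{k-1}^\diff$ has orthonormal columns. For the impulsive part, I would use $z_k^\imp - \widehat{z}_k^\imp = {U_k^\imp}^\top(\boldsymbol{\eta}_k - \widehat{\boldsymbol{\eta}}_k)$ and apply \ref{prop:imp} together with the identity $\boldsymbol{\eta}_k = -O_k^\imp e(t_k^-)$ from Lemma~\ref{lem:impulses}, to get
\[
   |z_k^\imp - \widehat{z}_k^\imp| \le \|U_k^\imp\|\,\eps_k\,\|O_k^\imp\|\,|e(t_k^-)|.
\]
The crucial link back to $e(t_{k-1}^+)$ is then \eqref{eq:errFlowOneSwitch}, which gives $e(t_k^-) = \me^{A_{k-1}^\diff \tau_{k-1}} e(t_{k-1}^+)$, so that $|e(t_k^-)| \le \|\me^{A_{k-1}^\diff \tau_{k-1}}\|\,|e(t_{k-1}^+)|$.

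Finally, I combine using $|U_k^\top v| \le \|U_k\|\,|v|$ and the trivial inequality for the stacked vector to obtain the claim with the explicit constant
\[
   M^z_{k-1,k} := \|U_k\|\bigl(1 + \|U_k^\imp\|\,\|O_k^\imp\|\,\|\me^{A_{k-1}^\diff \tau_{k-1}}\|\bigr),
\]
which by construction depends only on $\tau_{k-1}$ (through the exponential) and on the quadruples $(E_{k-1},A_{k-1},C_{k-1})$ and $(E_k,A_k,C_k)$ (through $A_{k-1}^\diff$, $O_k^\imp$, and the matrices $U_k, U_k^\imp$ defined from the Wong sequences and observability/impulse-observability matrices of the two adjacent modes). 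The only real subtlety is bookkeeping: one must check that none of the auxiliary matrices $Z_{k-1}^\diff$, $U_k$, $U_k^\imp$, $O_k^\imp$ depend on any mode other than the two indexed by $k-1$ and $k$, and that the reset term $\xi_{k-1}$ has already been absorbed in the value $e(t_{k-1}^+)$ on the right-hand side, so that the bound is genuinely in terms of the post-reset error and not the pre-reset error.
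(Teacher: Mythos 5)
Your proof is correct and follows essentially the same route as the paper's: the same three-way decomposition via $U_k$ (with the consistency component vanishing), the same use of \ref{prop:diff} and \ref{prop:imp}, and the same flow relation $e(t_k^-)=\me^{A_{k-1}^\diff\tau_{k-1}}e(t_{k-1}^+)$ to pull both bounds back to $t_{k-1}^+$; your explicit constant is just a submultiplicative upper bound for the paper's. The only step you take for granted is that the identity $\boldsymbol{\eta}_k=-O_k^\imp e_{k-1}(t_k^-)$ carries over to the observer setting, where $y^e[t_k]$ is defined by \eqref{eq:impulse_error} through the pre-reset copy $\widehat{x}_{k-1}[t_k]$ rather than through $e[t_k]$; the paper closes this by a short computation with Lemma~\ref{lem:impulses} showing that the input-dependent and $\xi_k$-dependent impulse contributions cancel, which you should include for completeness.
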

\begin{proof}
  Since $z_k-\widehat{z}_k = U_k^\top (0 / z_k^\diff -\widehat{z}_k^\diff / z_k^\imp - \widehat{z}_k^\imp)$, it suffices to consider the differences $z_k^\diff -\widehat{z}_k^\diff$ and $z_k^\imp - \widehat{z}_k^\imp$ individually.
  
  \emph{Bound for the difference $|z_k^\diff -\widehat{z}_k^\diff|$.} Invoking \ref{prop:diff} and the definition of $\mathbf{z}^\diff_{k-1}$ implies
  \[\begin{aligned}
     \left|z_k^\diff -\widehat{z}_k^\diff\right| &\leq \eps_k \left|\mathbf{z}_k^\diff(t_{k-1}^+)\right| = \eps_k \left|{Z_{k-1}^\diff}^\top e(t_{k-1}^+)\right|\\
     &\leq \eps_k \left\|{Z_{k-1}^\diff}^\top\right\| \left| e(t_{k-1}^+)\right|,
     \end{aligned}
  \]
  which is the desired bound.

  \emph{Bound for the difference $|z_k^\imp -\widehat{z}_k^\imp|$.} We first observe that
  \[
     y^e[t_k] = C_k (\widehat{x}_{k-1}[t_k]-x[t_k]) = C_k(\widehat{x}_{k-1}[t_k] - \widehat{x}_k[t_k] - e[t_k]).
  \]
From Lemma~\ref{lem:impulses}, it is easily seen that,
  \[
     \widehat{x}_{k-1}[t_k] - \widehat{x}_k[t_k] = -\sum_{i=0}^{n-2} (E_k^\imp)^{i+1} \xi_k \delta^{(i)}
  \]
  and
  \[
     e[t_k] = -\sum_{i=0}^{n-2} (E_k^\imp)^{i+1} (e(t_k^-) - \xi_k) \delta_{t_k}^{(i)}.
  \]
  Hence, $y^e[t_k]= \sum_{i=0}^{n-2} \eta_k^i \delta_{t_k}^{(i)}$ with
  \[
     \eta_k^i = - C_k(E_k^\imp)^{i+1} e(t_k^-) = - C_k(E_k^\imp)^{i+1} \me^{A^\diff_{k-1} \tau_{k-1}} e(t_{k-1}^+),
  \]
  where we invoked Lemma~\ref{lem:diffProj}. Combining this with \ref{prop:imp}, we obtain the desired bound:
  \[\begin{aligned}
     \left|z_k^\imp -\widehat{z}_k^\imp\right| &= \left|{U_k^\imp}^\top (\boldsymbol{\eta}_k - \widehat{\boldsymbol{\eta}}_k)\right|
     \leq \eps_k \left\|{U_k^\imp}^\top\right\| \left|\boldsymbol{\eta}_k\right|\\
     &\leq \eps_k M_{k-1,k}^\imp \left|e(t_{k-1}^+)\right|,
\end{aligned}
\]
 where
 $
   M_{k-1,k}^\imp := \left\|{U_k^\imp}^\top\right\| \left\|{O_k^\imp} \me^{A_{k-1}^\diff \tau_{k-1}}\right\|.
 $
 Altogether, we have \eqref{eq:boundzk-2} with
 \[
    M^z_{k-1,k} = \left\|U_k^\top\right\| \left| \begin{pmatrix}\|Z_{k-1}^\diff\|\\ M^\imp_{k-1,k}\end{pmatrix} \right|.
 \]
 \end{proof}

\begin{Lemma}\label{lem:convSeq}
Consider a sequence $(a_k)_{k\in\N}$ in $\R^m$, $m\in\N$, where
\begin{equation}\label{eq:average_bound}
  |a_k| \le \frac{\alpha}{n_k} \sum_{j=1}^{n_k} |a_{k-j}|
\end{equation}
for some $\alpha\in(0,1)$, and the positive integers $n_k \le k$ are such that the sequence $(k-n_k)_{k \in \N}$ is nondecreasing and unbounded.
It then holds that $\lim_{k\rightarrow \infty} |a_k| = 0$.
\end{Lemma}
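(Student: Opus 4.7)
The plan is to exploit the combination of two features: (i) the bound $|a_k| \le \alpha \cdot (\text{average of earlier terms})$ gives a uniform contraction factor $\alpha < 1$ applied to a moving-average window, and (ii) the unboundedness of $k - n_k$ guarantees that eventually the window contains only indices as large as we wish. This is the standard setup for a ``delayed contraction'' argument at the level of the $\limsup$.

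First, I would establish boundedness of the sequence. Let $b_k := \max_{0 \le j \le k} |a_j|$. Since the window $\{k-n_k, \ldots, k-1\}$ lies inside $\{0, \ldots, k-1\}$, the hypothesis gives
\[
 |a_k| \le \frac{\alpha}{n_k}\sum_{j=1}^{n_k} |a_{k-j}| \le \alpha\, b_{k-1} < b_{k-1},
\]
so $b_k = \max(b_{k-1}, |a_k|) = b_{k-1}$. Hence $b_k \equiv b_0 = |a_0|$, and in particular $L := \limsup_{k\to\infty} |a_k|$ is finite with $0 \le L \le |a_0|$.

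Second, I would show $L = 0$ by a standard ``pass to the $\limsup$'' argument. Fix $\delta > 0$. By definition of $\limsup$, there exists $K$ with $|a_j| \le L + \delta$ for all $j \ge K$. Since $k - n_k \to \infty$, there exists $K'\ge K$ such that $k - n_k \ge K$ for all $k \ge K'$. For such $k$, every index $k - j$ with $1 \le j \le n_k$ satisfies $k - j \ge K$, so
\[
 |a_k| \le \frac{\alpha}{n_k}\sum_{j=1}^{n_k}|a_{k-j}| \le \alpha\,(L + \delta).
\]
Taking $\limsup$ in $k$ yields $L \le \alpha(L+\delta)$, i.e.\ $(1-\alpha)L \le \alpha\delta$; since $\delta$ is arbitrary and $1 - \alpha > 0$, we conclude $L = 0$, and consequently $|a_k| \to 0$.

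There is essentially no hard step here—both ingredients are standard—but the one point that must be handled carefully is the interplay between the monotonicity-and-unboundedness hypothesis on $(k-n_k)$ and the choice of $K'$: the argument requires that once the left endpoint of the window crosses $K$, it stays past $K$, which is exactly what the monotonicity of $k - n_k$ gives. If one only assumed $k - n_k \to \infty$ without monotonicity, a slightly more delicate choice of $K'$ would be needed but the conclusion would still hold.
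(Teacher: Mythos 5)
Your proof is correct, but it takes a different route from the paper's. You first show the running maximum is nonincreasing (so $L:=\limsup_k|a_k|$ is finite) and then pass to the $\limsup$ in the averaged bound to get $L\le\alpha(L+\delta)$ for every $\delta>0$, hence $L=0$. The paper instead extracts a subsequence $(k_i)$ with $k_i\le k_{i+1}-n_{k_{i+1}}$, so that every averaging window based at an index in the block $[k_i,k_{i+1})$ reaches back only into $[k_{i-1},k_{i+1})$, and proves by induction that the block maxima satisfy $\max_{k_i\le j<k_{i+1}}|a_j|\le\alpha\,\max_{k_{i-1}\le j<k_i}|a_j|$, i.e.\ they decay geometrically in the block index. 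Your argument is shorter and avoids the induction, at the price of being purely qualitative; the paper's block decomposition yields an explicit geometric rate along the determinability windows, which matches the quantitative flavour of the surrounding observer analysis (where the $\eps_k$ are tuned so that the per-block contraction factors $c_i$ stay uniformly below $1$). Both proofs use the monotonicity of $(k-n_k)$ in the same essential way — to guarantee that once the left endpoint of the averaging window has passed a threshold it never returns — and your boundedness step plays the role that the base case of the paper's induction plays. One small quibble: in your closing remark, if one assumes $k-n_k\to\infty$ (as opposed to merely unbounded), the choice of $K'$ is already immediate from the definition of divergence, so no extra care is needed; it is only ``unbounded without monotonicity'' that would cause trouble.
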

\begin{proof}
We first extract a subsequence $(a_{k_i})_{i\in \N}$ such that
$
k_i \le k_{i+1} - n_{k_{i+1}}
$
which is possible due to assumptions on $(k-n_k)_{k \in \N}$.
We will show that
\begin{equation}\label{eq:indIneqMain}
\max_{k_i \le j < k_{i+1}} \vert a_j \vert \le \alpha \, \max_{k_{i-1} \le j < k_i } \vert a_j \vert
\end{equation}
which implies that $|a_k|$ converges to zero as $k \rightarrow \infty$.

From \eqref{eq:average_bound}, it follows that
$
|a_{k_i}| \le \alpha \max_{k_i-n_{k_i} \le j < k_i} |a_j|,
$
and knowing that $k_{i-1} \le k_i - n_{k_i}$, we have
\[
    |a_{k_i}| \le \alpha \max_{k_{i-1} \le j < k_i} |a_j|.
\]
We will now show inductively that for all $\ell\in\N$
\[
   |a_{k_i+\ell}| \le \alpha \max_{k_{i-1} \le j < k_i} |a_j|,
\]
from which \eqref{eq:indIneqMain} follows. 

For this, it is first observed that, for all $\ell >0$,
\[
   k_i + \ell - n_{k_i + \ell} \geq k_i - n_{k_i} \geq k_{i-1},
\]
where the first inequality is due to the nondecreasing assumption on $(k-n_k)_{k \in \N}$, and the second inequality results from the definition of $(a_{k_i})_{i \in \N}$.
To invoke the induction argument, we assume
\[
   |a_{j}| \le \max_{k_{i-1} \le j < k_i} |a_j|\quad\text{for }k_{i-1}\leq j \leq k_i+\ell-1
\]
and arrive at
\[\begin{aligned}
   |a_{k_i+\ell}| &\leq \alpha \max_{k_i+\ell - n_{k_i+\ell}\leq j\leq k_i+\ell-1} |\alpha_j|\\
     &\leq \alpha\max_{k_{i-1}\leq j \leq k_i + \ell - 1} |\alpha_j|\\
     & \leq \alpha\max_{k_{i-1}\leq j \leq k_i} |\alpha_j|.
   \end{aligned}
\]
\end{proof}

\section*{References}


\end{document}